\def\1{1\kern-.20em {\rm l}}
\newtheorem{theorem}{Theorem}[section]
\newtheorem{lemma}[theorem]{Lemma}
\newtheorem{remark}[theorem]{Remark}
\def\1{1\kern-.20em {\rm l}}
\newcommand{\R}{\mathbb{R}}
\newcommand{\E}{\mathbb{E}}
\newcommand{\bx}{\mathbf{x}}
\newcommand{\by}{\mathbf{y}}
\numberwithin{equation}{section}
\begin{document}

\title{\bf Joint parametric specification checking of conditional mean and volatility in  time series models with martingale difference innovations}
\author{{ Kilani Ghoudi}$^{1,}$\footnote{Email: kghoudi@uaeu.ac.ae. Funding is provided in part by the United Arab Emirates University UPAR grant.} ~~
{ Na\^amane La\"ib}$^{2,}$\footnote{Email:naamane.laib@cyu.fr, naamane.laib@sorbonne-universite.fr } ~~ and~~{ Mohamed Chaouch}$^{3,}$\footnote{Email: mchaouch@qu.edu.qa}\\
$^1$College of Business and Economics, United Arab Emirates University.\\
$^2$
CY Cergy Paris Université, Laboratoire AGM, UMR 8088 du CNRS.\\
F-95000 Cergy, France.\\
$^3$Department of Mathematics, Statistics and Physics, Qatar University.
}

\maketitle
\begin{abstract}
Using cumulative residual processes, we propose joint goodness-of-fit  tests for  conditional means and  variances functions in the context of nonlinear  time series with martingale difference innovations. The main challenge comes from the fact  the cumulative residual process no longer admits, under the  null hypothesis,  a distribution-free limit. To obtain a practical solution one either transforms the process in order to achieve a distribution-free limit or approximates the non-distribution free limit using a numerical or a re-sampling technique. Here the three solutions will be considered.
 It is shown that the proposed tests  have nontrivial power against a class of root-n local alternatives, and  are suitable
 when the conditioning information set is infinite-dimensional, which allows  including  models like autoregressive conditional heteroscedastic stochastic  models with dependent innovations.
The approach presented assumes only certain conditions on the first- and second-order conditional moments, without imposing any
autoregression  model.
 The test procedures introduced are  compared
 with each other and with other competitors
 in terms of their power using a simulation study and a real data application. These simulations have shown that
 the   statistical powers of tests based on re-sampling or numerical approximation of the original statistics are in general slightly better than those based on a martingale transformation of the original process.

\noindent{\bf Key words:} Autoregression, conditional mean, conditional variance, cumulative residual process, heteroscedasticity, martingale transform, re-sampling, interest rate, martingale differences, nonlinear times series, parametric specification.\vspace{1mm}

\noindent{\bf Subject Classifications:}  62F10, 62F05, 62J02, 62M10, 60J15.
\end{abstract}

\section{Introduction}

A great deal of the data in econometrics, engineering and natural sciences occurs in the form of time series,
 where the data are naturally dependent and the volatility is, in general, function of the past.  One then  expects better forecast results if  additional information allowing the conditional variance to   depend upon the past realizations is supposed.  One of the earliest development in this area is the work of \cite{Engle} who introduced the so-called autoregressive conditional heteroscedastic (ARCH)  model, which
 has been extended in a number of directions. The most important of these is the extension designed to include moving-average parts, namely the generalized ARCH (GARCH) model introduced by  \cite{Bollerslev1986}.
These  models are often used to parameterize conditional
heteroscedasticity that appears in many financial time-series such as exchange rates and stock return.
In applications, GARCH models have been specified for data at different frequencies assuming that the rescaled innovations are independent and identically distributed  (i.i.d.).  \cite{Dros-Nijman1993} pointed out that, the common assumption in applications that rescaled innovations are independent is disputable, since it depends upon the available data frequency. They classified
 GARCH models  into three categories:
 Strong GARCH requires that rescaled innovations are independent,
 semi-strong GARCH assumes that rescaled innovations are
 martingale difference sequences,  the weak form, where  the
 martingale difference sequence assumption is relaxed.
The weak-GARCH representation  has been
used previously, for instance, by \cite{GWhite(2004)}, were the authors showed, through simulations, the effects of misspecification when the true model is a GARCH with an innovation term that follows an AR(1) process.
\cite{KRahbek(2005)} also noted the relevance of
this model.  \cite{Dahl.Iglesias(2007)}
provided an empirical application showing that this process has empirical relevance.

Diagnostic  tests are integral part of any modelling exercise.
Several  time series models  are given by specifying
	 	 conditional mean and conditional variance functions.   Testing the correct specification of these  quantities is of major importance in the model validation. A great deal of
tests proposed in the literature focuses on testing either the mean function or the volatility function for time series, but usually not both. \cite{escanciano2008, escanciano2010} discussed  joint tests for  parametric form of the mean and volatility functions. He also argued that if the mean is misspecified, tests of volatility functions are usually misleading.

 This paper derives   joint  goodness-of-fit tests for the conditional means  and variances functions  for    strictly  stationary ergodic time series  $\{X_i, i\ge 0\}$ with martingale difference  innovations $X_i-\mathbb{E}\{X_i|{\cal F}_{i-1}\}$,  where ${\cal F}_i$ is the
 $\sigma$-field generated by the observations
 obtained up to time $i$,  without imposing any type of  autoregressive model.  The martingale hypothesis is very important in economics theory, for instance,
 dynamic equilibrium approaches to macroeconomics have
 imposed martingale restrictions on numerous time series of interest (see, e.g.,  \cite{Durlauf} for more discussions on the martingale  hypothesis arising in other contexts of economic theory).
 The assumption of  martingale difference innovations considered here is  more general than the standard assumption of i.i.d innovations as it allows some dependence structure in the innovations. The framework we are considering here is suitable for cases in which the conditioning set is infinite-dimensional and may be
 used for models  that do not necessary satisfy  Markov property, particularly, semi-parametric models, where the  conditional mean and the conditional variance of $X_i$ given ${\cal F}_{i-1}$ have parametric forms. This includes most  processes  usually used for modeling financial time series with dependent innovations, such as  GARCH, ARMA-GARCH, exponential and threshold autoregressive processes with GARCH errors.

 Goodness-of-fit tests  for {\it parametric and semiparametric hypotheses of the regression function}  have been considered in the literature, with emphasis  on i.i.d innovations, see for instance \cite{Stute1997} who  presented non-parametric full-model checks for regression based on the limiting law of  the residual marked process, see also \cite{GC} for a  survey on the topic.  \cite{EPV} discussed a  general methodology for constructing nonparametric/semiparametric asymptotically distribution-free  tests  about regression models for possibly dependent data. Similar study investigated  the {\it autoregressive function in time series models}, see, e.g.,  \cite{DIE}, \cite{McZh}, \cite{La99}, \cite{KS}. 
 In the context of  {\it time series with martingale difference innovations},  \cite{SPGK2006} provided  non-parametric tests  based on residual cusums for testing the autoregressive function in higher-order time-series models, and \cite{EM} proposed data-driven smooth  asymptotically distribution-free tests based on the principal components of certain marked empirical processes for testing  the martingale difference hypothesis  of a possibly non-linear time series.\\
 Testing hypotheses about the {\it conditional variance function of regression models} are investigated by many  authors  in the literature.
 \cite{WaZh} considered a nonparametric diagnostic test for checking the constancy of the conditional variance function, in a nonparametric regression model, where the  co-variables are fixed design points, without  assuming a known  parametric form for the conditional mean function. \cite{DNV} proposed  a test procedure
 for testing the parametric form of the conditional variance in the common nonparametric regression model.
%
%
  \cite{KSong} discusses the problem of fitting a parametric model to the conditional variance function in a class of heteroscedastic regression models. Their   test is
   based on the supremum of the Khmaladze type martingale transformation of a certain partial sum process of calibrated squared residuals.
   The proposed statistical  test is shown to be consistent against a large class of fixed
   alternatives and to have nontrivial asymptotic power against a class of nonparametric local alternatives.
    Recently, \cite{PMA}  have proposed several  nonparametric statistical tests for  checking whether the conditional variances are equal in $k\geq 2$ location-scale regression models. Their
   procedure is based on the comparison of the error distributions under the null hypothesis of equality of variances functions.
\cite{PolYao} propose two  tests for  testing multivariate volatility functions using minimum volume sets and inverse regression. Their  tests are based on cumulative sums coupled with either minimum volume sets or inverse regression ideas.

Tests of conditional  {\it variance functions in time series context} were also previously considered in the literature. In particular,
\cite{AuTj},  focused on kernel estimate of the one step lagged conditional mean and variance functions for the purpose of identifying common linear models such as threshold and exponential autoregressive.   \cite{DIE}  established the consistency of regressogram type estimators of the  conditional  mean and conditional variance  functions. He deduced   nonparametric goodness-of-fit tests for known form of these functions.
\cite{CA} proposed a  Kolmogorov-Smirnov type statistic to test the homoscedasticity hypothesis,
  when the observations are assumed to be strongly mixing.   Their test uses only  a subsample which induces a  loss of information and power.
   \cite{N}  presented a procedure, based on marked empirical process,  for testing the goodness-of-fit of the conditional variance function of a Markov model of order one.

For time {\it series with  martingale difference innovations}
 \cite{LaChe} considered a class of nonlinear semi-parametric models and established the local asymptotic normality for cumulative residual process.
 They also derived an efficient  simultaneous   test  for testing the conditional mean and
 the conditional variance functions. \cite{LaLou} provided a  test  of  conditional  homoscedasticity hypothesis  of  the  one-step  forecast  error  in the context of  first-order AR-ARCH model. Their works was extended by \cite{La2003}  for the context of time series with martingale difference innovations. The author established the asymptotic of the cumulative residual process and developed a test  for homoscedasticity   when the innovations are independent of the past of  ${\cal F}_{i-1}$. \cite{chen-etal2015}
 developed  two tests for parametric volatility function of a diffusion model, with i.i.d. innovations, based on Khamaladze's  martingale transformation. Their tests use the structural properties of the diffusion process and do not require the estimation or the specification of  the drift function.
\cite{escanciano2008}  proposed a class of joint and marginal spectral diagnostic tests for parametric conditional means and variances  of  time series models. The proposed tests are not distribution-free and the author introduced a   bootstrap  procedure that should be used to  implement    these tests.
 \cite{escanciano2010} constructed    asymptotic distribution-free joint  specification tests, that can be  applied in many  financial and economic time series including   GARCH and  ARMA-GARCH models. These  tests are  based on carefully weighted residual empirical process. The weights are chosen in a way to insure that  the weighted empirical process of residuals admits a distribution free limit.
 It is shown that the proposed tests generalize those of  \cite{Wooldridge1990}. Note however that,  the performance of  the constructed tests strongly depend on the choice of the weights.

 In this paper we  develop a joint test for parametric form specification of the conditional mean and variance functions when innovations satisfy the martingale difference hypothesis and are allowed to depend on the past $\sigma$-field.
The main challenge comes from the fact  the marked cumulative residual process no longer admits a distribution-free limit. To obtain a practical solution one either transforms the process in order to achieve a distribution-free limit or approximates the non-distribution free limit using numerical or re-sampling techniques. Here, three solutions, based on Khmaladze transform of cumulative residual process, a multiplier bootstrap re-sampling procedure, and a numerical approximation of the limiting distribution, are considered and their finite sample performance is compared with \cite{escanciano2010}'s  procedure.

Though Khmaladze transform of the cumulative residual process was considered in previous works, such as \cite{chen-etal2015} or \cite{stute-etal1998}, the setting addressed here is more general and does not assume any explicit data-generating model or independence of innovations. It may be applied for several non-linear time series models with martingale innovations. The second added value of this paper is that it compares numerically the three classical procedures used for constructing joint tests based on the cumulative residual process. The numerical study particularly revealed that the Khmaladze transform is slightly less performant than re-sampling and numerical approximation techniques.

The rest of the paper is organize as follows: Section \ref{mr} defines the problem and states some preliminaries results. Section \ref{sec3} defines and establishes the asymptotic behavior of the martingale transformation applied to the cumulative residual process. Section \ref{sec4}  introduces the marginals as well as the joint parametric specification Cram\'er von-Mises-type tests statistics for the conditional mean and conditional variance functions.  Subsection \ref{MTA} presents the martingale transform based test statistics. A numerical approximation procedure for the asymptotic distribution of the test statistics based on the original process is given in Subsection \ref{secnumapp}.  A re-sampling algorithm for test statistics based on the original cumulative residual process is also detailed in  Subsection \ref{secmult}.
A comparison between  these   statistical procedures,  via  simulations,  is outlined in Section \ref{sim} and an application to real data is given in Section \ref{app}. A conclusion summarizing our findings is given in Section \ref{conc}.  All proofs are provided in the Appendix.

\section{Assumptions and Main results}\label{mr}
Let $\{(X_i,Z_i), i\in  \mathbb{Z}\}$ be a strictly
stationary   ergodic   defined on the probability space $(\Omega, {\cal A}, \mathbb{P})$. The random variables $X_i$'s are real-valued with common
continuous distribution function ${F}$.
For each $i\ge 1$ we let $I_{i-1}=(X_{i-1},X_{i-2},\ldots, Z_{i})$ denote the past information at time $i$. We let ${\cal F}_i=\sigma(I_0, I_1, \ldots, I_i)$ denotes
$\sigma$-field generated by $I_0, \ldots, I_i$. The purpose is to verify if the  conditional mean $\mu(I_{i-1})=\mathbb{E}\left(X_i \vert {\cal F}_{i-1} \right)$  almost surely (a.s.) and the conditional variance $\sigma^2(I_{i-1})=\mathrm{Var}\left( X_i \vert {\cal F}_{i-1} \right)$ a.s.  of
$X_i$ given ${\cal F}_{i-1}$ satisfy, respectively, the following relations
$
\mu(I_{i-1})=m(\theta_0,I_{i-1})$ a.s. and
$\sigma^2(I_{i-1})= \sigma^2(\theta_0,I_{i-1})$ a.s., where $m(\theta,\cdot)$ and $\sigma^2(\theta,\cdot)$ are $\mathcal{F}_{i-1}$-measurable known parametric functions that depend on a finite dimensional vector of parameters $\theta\in \mathbb{R}^d$ for $d\ge 1$, and assumed to be finite with probability one.
  More precisely, we are interested in examining   hypotheses stating that
 \begin{eqnarray}
 & \ (H_0)& \mu(I_{i-1})=m(\theta_0,I_{i-1}) \ \mbox{a.s. and }\sigma^2(I_{i-1})=\sigma^2(\theta_0,I_{i-1}) \ \mbox{a.s.} \
 \mbox{ versus } \nonumber\\
 &(H_1)& \quad \mu(I_{i-1})\neq m(\theta, I_{i-1}) \mbox{ or } \sigma^2(I_{i-1})\neq \sigma^2(\theta_0, I_{i-1}).
 \label{hypotheses2}
 \end{eqnarray}
We assume throughout this manuscript that for the true value of $\theta$, denoted $\theta_0$ belongs to the interior of some compact subset $\mathbf{\Xi}\subset\mathbb{R}^d$. We also assume that the sequence of innovations $\{ X_i -m(\theta_0,I_{i-1}):\, \ i \geq 0 \}$ is a sequence of martingale differences with respect to ${\cal F}_{i}$,  that  is $X_i -m({\theta_0},I_{i-1})$ is ${\cal F}_{i}$-measurable and
$\mathbb{E}\left[\left(X_i -m({\theta_0},I_{i-1}) \right)\ |    {\mathcal {F}}_{i-1}\right]=0 \ \ \text{a.s.}$
This condition combined with  the fact that the set of information $I_i$ has an infinite dimensional will  allow to consider  non markovian processes such as the  ARMA-GARCH  process and all others.
In practice the set $I_i$ is  	not observable and may be estimated, see Remark \ref{Remark1} and  \cite{escanciano2010} for more discussion.

To test the specification of the conditional mean function we will use $W^1_{\theta}(  X, I):=X-m({\theta}, I)$ and to test
for the conditional variance function we will use   $W^2_{\theta}(  X, I):=(X-m({\theta},I))^2-\sigma^2({\theta}, I)$.
Following  \cite{escanciano2007}, \cite{escanciano2008}, \cite{KS},  \cite{La2003} and  \cite{N} we  introduce  the following cumulative  empirical residuals processes
 \begin{eqnarray*}
 \mathbb{D}_n^k(x)&=&n^{-1/2}\sum_{i=1}^n W^k_{\theta_0}(  X_i, I_{i-1})
\1{\{X_{i-1}\leq x\}}, \ \
 x\in \mathbb{R} \mbox{ and } k=1,2. \\
 \widehat{\mathbb{D}}_n^k(x)&=&n^{-1/2}\sum_{i=1}^n  W^k_{\theta_n}(  X_i, I_{i-1})\1{\{X_{i-1}\leq x\}}, \ \ x\in
 \mathbb{R},\mbox{ and } k=1,2\\
 \end{eqnarray*}
 where $\1(A)$ is the indicator function of the set $A$, $\theta_n$ is
 consistent estimators of $\theta$.  Note that,  under $H_0$,   $\mathbb{E}\{W^k_{\theta_0}(  X_i, I_{i-1})|\mathcal{F}_{i-1}\}=0$ for $k=1,2$.  To test  specification of the conditional mean and variance jointly,
 we introduce the bivariate processes $\mathbb{D}_n$ and $\widehat{\mathbb{D}}_n$ defined  by $\mathbb{D}_n(x):=(\mathbb{D}_n^1(x),\mathbb{D}_n^2(x))^\top$ and $\widehat{\mathbb{D}}_n(x):=(\widehat{\mathbb{D}}_n^1(x),\widehat{\mathbb{D}}_n^2(x))^\top$, where the script $B^\top$ stands  for the transpose of the  matrix $B$.
Note that letting $W_\theta(X,I):=\{W^1_\theta(X,I),W^2_\theta(X,I)\}^\top$ one sees that
$$\mathbb{D}_n(x)= n^{-1/2}\sum_{i=1}^n W_{\theta_0}(  X_i, I_{i-1})
\1{\{X_{i-1}\leq x\}}$$ and
$$\widehat{\mathbb{D}}_n(x)= n^{-1/2}\sum_{i=1}^n W_{\theta_n}(  X_i, I_{i-1})
\1{\{X_{i-1}\leq x\}}.$$

\begin{remark}
 Comparing  the empirical process $\widehat{\mathbb{D}}_n^k$ with that used  in \cite{escanciano2010}, one notice that
$\widehat{\mathbb{D}}_n^k$ is  a marked empirical process based on $X_{i-1}$ with marks/weights given by $W_{\theta_n}^k$ while the process used in \cite{escanciano2010} is based on the residuals $\widehat{\epsilon}_i$, with weights carefully estimated from data.

\end{remark}
\begin{remark}\label{Remark1}
As pointed in \cite{escanciano2010}, the past information $I_{i-1}$ is not completely observable and needs to be estimated by $\hat{I}_{i-1}$. Such estimation usually involves  replacing the unknown initial state $I_0$ by some quantity $\hat I_0$. One must insure that replacing $I_{i-1}$ by $\hat{I}_{i-1}$ does not affect the asymptotic behaviour of the process $\widehat{\mathbb{D}}_n$. One easily verifies that if for $k=1,2$, $$n^{-1/2}\sum_{i=1}^n \mathbb{E}\left\{\sup_{\theta\in \mathbf{\Xi}}|W^k_{\theta}(  X_i, \hat{I}_{i-1})-W^k_{\theta}(  X_i, I_{i-1})|\right\}=o(1)$$
then the asymptotic behaviour of $\widehat{\mathbb{D}}_n$ is not altered when $I_{i-1}$ is replaced by  $\hat{I}_{i-1}$. We will assume that such condition holds and will use $I_{i-1}$ in the definition of
$\widehat{\mathbb{D}}_n$ throughout the manuscript. The discussion in \cite{escanciano2010} shows that this condition holds true for ARMA-GARCH models in particular.
\end{remark}

 The limiting laws of $\mathbb{D}_n^1$ and  $\widehat{\mathbb{D}}_n^1$ are given in \cite{KS} and \cite{escanciano2007}. The limiting behavior of $\mathbb{D}_n^2$ is given in  \cite{La2003}, who also obtained the limit law of $\widehat{\mathbb{D}}_n^2$ in the special case $\sigma^2(\cdot)=\zeta^2$ with $\zeta^2\in (0,\infty)$. The next section studies the asymptotic for the bivariate process $\widehat{\mathbb{D}}_n$ .
\subsection{Limiting law of the process $\widehat{\mathbb{D}}_n$ under the null hypothesis}

The following notations are used in the rest of the paper.  $\|v \|$ denotes the Euclidian norm of the vector $v$ and for any matrix $A$, $\|A\|=\sup_{v:\; \|v\|=1}\{\|Av\|\}$ is the associated matrix norm for the matrix $A$. For any bounded  function $f$, let $\Vert f \Vert =\sup_x\left | f(x)\right|$ and for any
$(x, y)\in \mathbb{R}^2$, set
$x\wedge y= \min (x, y)$ and  $x\vee y =\max(x, y)$.

The asymptotic results are stated under  the following  assumptions:\\
{\bf Assumption A1:} For each $k=1,2$,
\begin{enumerate}
\item $\mathbb{E} \left(\vert W^k_{\theta_0}( X_i ,I_{i-1})\vert^{2}\right)<\infty $.
\item $\lim_{n\to \infty} \mathbb{E}\left\{ W^k_{\theta_0}( X_i ,I_{i-1})\1\{|W^k_{\theta_0}( X_i ,I_{i-1})|>\delta \sqrt{n}\}\right\}=0 $  for any real $\delta>0$.
\item $K_k(x)=\lim_{n\to\infty}\frac{1}{n}\sum_{i=1}^n\mathbb{E} \left( W^k_{\theta_0}( X_i ,I_{i-1})^2\1\{X_{i-1}\le x\}\right)$ is non-decreasing continuous function of $x$.
\item $ \mathbb{E} \left( W^k_{\theta_0}( X_i ,I_{i-1})^2\1\{x\le X_{i-1}\le y\}|\mathcal{F}_{i-2}\right)=C_{k,i}|K_k(y)-K_k(x)|$ with $C_{k,i}$ such that
$\frac{1}{n}\sum_{i=1}^n \mathbb{E}|C_{k,i}|=O(1)$.
\end{enumerate}
{\bf Assumption A2:}
Under $(H_0)$, the estimators $\theta_n$ of $\theta_0$ is  such that
 $$n^{1/2}(\theta_n -\theta_0)=    n^{-1/2}\sum_{i=1}^n\phi^\star(X_i, I_{i-1}, \theta_0)+ o_{\mathbb{P}}(1)$$
where $\phi^\star$ is an $\mathbb{R}^d$-valued  measurable function  satisfying $\mathbb{E}\left( \phi^\star(X_i,I_{i-1},\theta_0) \vert {\cal F}_{i-1}\right)=0$ a.s.  and
 $$\Sigma_0=\lim_{n\to\infty}\frac{1}{n}\sum_{i=1}^n\mathbb{E}\left( \phi^\star(X_i,I_{i-1},\theta_0) \phi^\star(X_i,I_{i-1},\theta_0)^\top \right)$$
 exists and is positive definite.\\
{\bf Assumption A3:}
For $k=1,2,$ let  $\dot{W}^k_{\theta^\star}(X_i,I_{i-1})$ denote the gradient of $W^k_\theta(X_i,I_{i-1})$ with respect to the components of $\theta$ evaluated at a fixed point $\theta^\star$. Assume that $\mathbb{E}\|\dot{W}^k_{\theta_0}(X_i,I_{i-1}))\|\le C<\infty$ and $$\left\|W^k_{\theta}(X_i,I_{i-1})-W^k_{\theta_0}(X_i,I_{i-1})-(\theta-\theta_0)^\top \dot{W}^k_{\theta_0}(X_i,I_{i-1})\right\| \le \| \theta-\theta_0\| M_1(X_i,I_{i-1})\lambda_1(\| \theta-\theta_0\|),$$
  where $M_1$ and $\lambda_1$ are positive functions satisfying $\mathbb{E}(M_1(X_i,I_{i-1})\le C<\infty$ and $\lambda_1(t)$ goes to zero as $t\to 0$.\\

Note that Assumption (A1) is an adaptation of conditions (A--D) of Theorem 1 of \cite{escanciano2007b} to the context of stationary and ergodic sequence.
As argued in \cite{escanciano2007b} these are among the weakest conditions to ensure the weak convergence of marked empirical processes. Assumption (A2) and (A3) are commonly used to ensure the convergence of $\sqrt{n}(\theta_n-\theta_0)$ and to validate the expansion of the process $\widehat{\mathbb{D}}_n$. These assumptions hold for most commonly used models and estimation procedures, see, for example \cite{KS, escanciano2007, escanciano2010} for discussion and details.


From now on let
$\mathbf{D}(\overline{\mathbb{R}})$ denotes the space of c\`adl\`ag functions and for $x,y\in \mathbb{R}$ define
 \begin{eqnarray}\label{def6*}
 K(x,y)=\rm{Cov}(\mathbb{D}_n(\bx),\mathbb{D}_n(\by))=\left(\begin{array}{ll}K_{1}(x\wedge y)&K_{1,2}(x\wedge y)\\
 K_{1,2}(x\wedge y)&K_{2}(x\wedge y)\end{array}\right),
\end{eqnarray}
where $K_1$ and $K_2$ are given in Assumption (A1) and for $x\in \mathbb{R}$
\begin{eqnarray}\label{def6}
K_{1,2}(x)&=& \lim_{n\to \infty}\frac{1}{n}\sum_{i=1}^n\mathbb{E}\left[W^1_{\theta_0}(X_i,I_{i-1})W^2_{\theta_0}(X_i,I_{i-1})  \1\left(  X_{i-1}\leq
x\right)\right].
\end{eqnarray}
Let $\mathbb{D}$ denotes the Gaussian process  with covariance function $K$ defined above.
The following result summarizes the weak convergence of the processes ${\mathbb{D}}_n$ and $\widehat{\mathbb{D}}_n$.
\begin{theorem}\label{TheoremA} If Assumption (A1) holds then, under $H_0$,  ${\mathbb{D}}_n$ converges weakly to $ {{\mathbb{D}}}.$
If Assumptions (A1)-(A3) hold true, then, under $H_0$,
$\widehat{\mathbb{D}}_n$ converges weakly to $ {\widehat{\mathbb{D}}},$
where  ${\widehat{\mathbb{D}}}$ is
a   centered Gaussian process  given by
\begin{eqnarray}\label{def7}
\widehat{\mathbb{D}}(x)=\mathbb{D}(x)-\Gamma_{\theta_0}^\top(x) \Theta
\end{eqnarray}
where  $\Theta$ is a multivariate normal with mean zero and covariance $\Sigma_0$ and the function 
$\Gamma_{\theta_0}(x)=(\Gamma^1_{\theta_0}(x),\Gamma^2_{\theta_0}(x))$ with
$$\Gamma^1_{\theta_0}(x)=\lim_{n\to \infty}\frac{1}{n}\sum_{i=1}^n \mathbb{E}\left[\dot{m}(\theta_0,I_{i-1})\1\{X_{i-1}\le x\}\right]$$ and
$$\Gamma^2_{\theta_0}(x)=\lim_{n\to \infty}\frac{1}{n}\sum_{i=1}^n \mathbb{E}\left\{\dot{\sigma}^2( \theta_0, I_{i-1})\1\{X_{i-1}\le x\}\right\}.$$
The covariance function of $\widehat{\mathbb{D}}$ is given by
\begin{eqnarray}\label{def9}
{\mathcal{K}}(x, y)= {K}(x,y)-\Gamma^\top_{\theta_0}(x) G(y)-G^\top(x)\Gamma_{\theta_0}(y)
+ \Gamma^\top_{\theta_0}(x)\Sigma_0\Gamma_{\theta_0}(y),
\end{eqnarray}
where $G(x)=\mathrm{Cov}(\Theta,\mathbb{D})=(G^1(x),G^2(x))$ with
$$G^k(x)=\lim_{n\to\infty}\frac{1}{n}\sum_{i=1}^n\mathbb{E}\left[ W^k_{\theta_0}(X_i,I_{i-1})\phi^\star(X_i,I_{i-1},\theta_0) \1\left(  X_{i-1}\leq x\right)\right] \mbox{ for } k=1,2.$$
\end{theorem}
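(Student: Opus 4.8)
The plan is to treat $\mathbb{D}_n$ as a bivariate marked empirical process whose summands form a martingale difference array, prove its weak convergence first, and then obtain $\widehat{\mathbb{D}}_n$ from $\mathbb{D}_n$ through a uniform (in $x$) asymptotic linear expansion in $\theta$, coupling $\mathbb{D}_n$ with $n^{1/2}(\theta_n-\theta_0)$.

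\textbf{Convergence of $\mathbb{D}_n$.} Under $H_0$ we have $\mathbb{E}\{W^k_{\theta_0}(X_i,I_{i-1})\mid\mathcal{F}_{i-1}\}=0$, and $\1\{X_{i-1}\le x\}$ is $\mathcal{F}_{i-1}$-measurable, so for each fixed $x$ the array $\{W^k_{\theta_0}(X_i,I_{i-1})\1\{X_{i-1}\le x\}\}_i$ is a martingale difference array. I would obtain the finite-dimensional convergence of $\mathbb{D}_n=(\mathbb{D}_n^1,\mathbb{D}_n^2)^\top$ by a martingale CLT combined with Cram\'er--Wold: fixing points $x_1,\dots,x_m$ and scalars and forming the linear combination, one verifies the conditional Lindeberg condition from A1(2) and computes the limiting conditional variance from A1(3), A1(4) and the definition (\ref{def6}) of $K_{1,2}$, which yields exactly the covariance $K(x,y)$ of (\ref{def6*}). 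For tightness/stochastic equicontinuity I would use the increment bound supplied by A1(4), which controls the conditional second moment of $\mathbb{D}_n^k(y)-\mathbb{D}_n^k(x)$ by increments of the continuous non-decreasing $K_k$; together with A1(3) this controls the modulus of continuity. Since A1 is precisely the adaptation of conditions (A)--(D) of Theorem 1 of \cite{escanciano2007b} to the stationary ergodic setting, the componentwise convergence follows from that result and the bivariate statement from Cram\'er--Wold.

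\textbf{Expansion for $\widehat{\mathbb{D}}_n$.} Writing $\widehat{\mathbb{D}}_n^k(x)=n^{-1/2}\sum_i W^k_{\theta_n}(X_i,I_{i-1})\1\{X_{i-1}\le x\}$ and inserting the first-order expansion of A3, $W^k_{\theta_n}=W^k_{\theta_0}+(\theta_n-\theta_0)^\top\dot{W}^k_{\theta_0}+R_i^k$, gives
$$\widehat{\mathbb{D}}_n^k(x)=\mathbb{D}_n^k(x)+\{n^{1/2}(\theta_n-\theta_0)\}^\top\,\frac1n\sum_{i=1}^n\dot{W}^k_{\theta_0}(X_i,I_{i-1})\1\{X_{i-1}\le x\}+n^{-1/2}\sum_{i=1}^n R_i^k.$$
The remainder is uniformly negligible: A3 yields $n^{-1/2}\sum_i|R_i^k|\le \{n^{1/2}\|\theta_n-\theta_0\|\}\,\lambda_1(\|\theta_n-\theta_0\|)\,\{n^{-1}\sum_iM_1(X_i,I_{i-1})\}$, and since $n^{1/2}\|\theta_n-\theta_0\|=O_{\mathbb{P}}(1)$ by A2, $\lambda_1(\|\theta_n-\theta_0\|)=o_{\mathbb{P}}(1)$, and $n^{-1}\sum_iM_1=O_{\mathbb{P}}(1)$ by the ergodic theorem (using $\mathbb{E}M_1\le C$), the whole bound is $o_{\mathbb{P}}(1)$ uniformly in $x$ because the indicator is bounded by one. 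For the drift I would establish the uniform law of large numbers $n^{-1}\sum_i\dot{W}^k_{\theta_0}(X_i,I_{i-1})\1\{X_{i-1}\le x\}\to-\Gamma^k_{\theta_0}(x)$: the pointwise limit follows from the ergodic theorem (with $\dot{W}^1_{\theta_0}=-\dot{m}$, and $\dot{W}^2_{\theta_0}=-2(X_i-m)\dot{m}-\dot{\sigma}^2$, whose cross term has conditional mean zero under $H_0$ so only $-\dot{\sigma}^2$ survives), and the uniformity from monotonicity of the indicator together with continuity of $\Gamma^k_{\theta_0}$ (inherited from continuity of $F$), via a Glivenko--Cantelli/P\'olya argument using $\mathbb{E}\|\dot{W}^k_{\theta_0}\|\le C$.

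\textbf{Coupling, limit, and covariance.} It remains to prove the joint weak convergence of $(\mathbb{D}_n,\,n^{1/2}(\theta_n-\theta_0))$ to $(\mathbb{D},\Theta)$. Because $n^{1/2}(\theta_n-\theta_0)=n^{-1/2}\sum_i\phi^\star(X_i,I_{i-1},\theta_0)+o_{\mathbb{P}}(1)$ with $\phi^\star$ conditionally centered (A2), both coordinates are normalized martingale-difference sums driven by the same observations, so the joint fidis follow from one multivariate martingale CLT, tightness being carried by the already-handled process coordinate; the limiting cross-covariance is $G(x)=\mathrm{Cov}(\Theta,\mathbb{D})$ with the stated components $G^k$. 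Applying Slutsky's lemma and the continuous mapping theorem to the expansion gives $\widehat{\mathbb{D}}_n\Rightarrow\mathbb{D}-\Gamma_{\theta_0}^\top\Theta=\widehat{\mathbb{D}}$, and expanding $\mathrm{Cov}(\widehat{\mathbb{D}}(x),\widehat{\mathbb{D}}(y))$ bilinearly produces (\ref{def9}). The genuinely technical step is the tightness of the bivariate $\mathbb{D}_n$, which is exactly where A1(4) does the work; here I would lean on \cite{escanciano2007b} rather than re-derive the chaining bound. The secondary care point is making the drift limit uniform in $x$ simultaneously with the $o_{\mathbb{P}}(1)$ remainder control, so that the expansion holds in the sup-norm on $\mathbf{D}(\overline{\mathbb{R}})$ rather than merely pointwise.
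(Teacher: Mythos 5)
Your proposal is correct and follows essentially the same route as the paper's proof: the weak convergence of $\mathbb{D}_n$ is delegated to Theorem 1 of \cite{escanciano2007b}, the expansion of $\widehat{\mathbb{D}}_n$ via Assumption (A3) with the remainder bounded by $\sqrt{n}\|\theta_n-\theta_0\|\,\lambda_1(\|\theta_n-\theta_0\|)\,n^{-1}\sum_i M_1(X_i,I_{i-1})$ is identical, your Glivenko--Cantelli/P\'olya argument for the gradient average is exactly the paper's uniform LLN (Lemma \ref{lem_ULLN}), and the conclusion via joint convergence of $(\mathbb{D}_n,\sqrt{n}(\theta_n-\theta_0))$ under (A2), Slutsky, and a bilinear covariance computation matches step for step. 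Your explicit sign bookkeeping for $\dot{W}^1_{\theta_0}=-\dot{m}$ and $\dot{W}^2_{\theta_0}$ (with the conditionally centered cross term) is a welcome clarification that the paper glosses over, but it does not change the argument.
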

\begin{remark}
The limiting covariance function ${\mathcal{K}}$ is in general complicated, therefore classical statistics based on the process $\widehat{\mathbb{D}}_n$ do not admit distribution-free limits. To overcome this, one usually uses one of the following approaches. The first approach consists in transforming the process $\widehat{\mathbb{D}}_n$ in a such way to achieve a distribution-free limit and then using the transformed process to  define test statistics. While the second approach uses $\widehat{\mathbb{D}}_n$ to construct test statistics and then adopts either a numerical approximation or a re-sampling technique to estimate the non-distribution free limit. Both approaches will be discussed and compared in the rest of this manuscript. The transformation is discussed in Section \ref{sec3} while the numerical and re-sampling approximations are outlined in Section \ref{sec4}.
\end{remark}


\subsection{Limiting law of the process $\widehat{\mathbb{D}}_n$ under local alternatives}\label{loc}
In this section we establish the limiting behavior of the process $\widehat{\mathbb{D}}_n$ under local alternatives $H_A$ defined as follows.
$$H_A:\;  \mu(I_{i-1})=m(\theta_0,I_{i-1}) +a_1(I_{i-1})/\sqrt{n}\mbox{ and }\sigma^2(I_{i-1})=\sigma^2(\theta_0,I_{i-1}) +a_2(I_{i-1})/\sqrt{n},$$
where  $a_1$ and $a_2$ are some   non-zero functions indicating the direction of departure from the null hypothesis. We assume that	$\mathbb{E}|a_k(I_{i-1})|<\infty$ for $k=1,2$.
The following assumption is needed to establish the limiting behavior under $H_A$.\\
\noindent{\bf Assumption (L1)} Under $H_A$, the estimator $\theta_n$ satisfies
$$n^{1/2}(\theta_n -\theta_0)=    n^{-1/2}\sum_{i=1}^n\phi^\star(X_i, I_{i-1}, \theta_0)+ \xi_A+o_{\mathbb{P}}(1),$$ where $\xi_A\in\mathbb{R}^d$ is a random/nonrandom vector  and
$\phi^\star$ is as in Assumption (A2).

The following Theorem shows that our tests are able to detect local alternatives of the type described by $H_A$.
\begin{theorem}\label{localt}
	Under $H_A$, if Assumptions A1, A3 and L1 hold, the process $\widehat{\mathbb{D}}_n$ converges weakly to $\widehat{\mathbb{D}} +\Psi_A(x) -\Gamma^\top_{\theta_0}(x) \xi_A,$ where $\Psi_A(x)=\mathbb{E}\left\{\left(\begin{array}{l}a_1(I_{i-1})\\ a_2(I_{i-1})\end{array}\right)\1\{X_{i-1}\le x\}\right\}$.
\end{theorem}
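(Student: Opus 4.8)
The plan is to mirror the proof of Theorem~\ref{TheoremA}, isolating the two deterministic drifts that the local alternative introduces. Writing $\epsilon_i=X_i-\mu(I_{i-1})$ for the genuine martingale-difference innovation under $H_A$, I would first record the conditional behaviour of the marks evaluated at $\theta_0$. A direct computation, using $\mathbb{E}(\epsilon_i\mid\mathcal{F}_{i-1})=0$ and $\mathbb{E}(\epsilon_i^2\mid\mathcal{F}_{i-1})=\sigma^2(I_{i-1})=\sigma^2(\theta_0,I_{i-1})+a_2(I_{i-1})/\sqrt{n}$, gives $\mathbb{E}\{W^1_{\theta_0}(X_i,I_{i-1})\mid\mathcal{F}_{i-1}\}=a_1(I_{i-1})/\sqrt{n}$ and $\mathbb{E}\{W^2_{\theta_0}(X_i,I_{i-1})\mid\mathcal{F}_{i-1}\}=a_2(I_{i-1})/\sqrt{n}+a_1^2(I_{i-1})/n$. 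Centring, I would write $W^k_{\theta_0}=\widetilde{W}^k_i+\mathbb{E}\{W^k_{\theta_0}\mid\mathcal{F}_{i-1}\}$ so that $\{\widetilde{W}^k_i\}$ is a martingale-difference array, and split $\mathbb{D}_n(x)=n^{-1/2}\sum_i\widetilde{W}_i\1\{X_{i-1}\le x\}+n^{-1/2}\sum_i\mathbb{E}\{W_{\theta_0}\mid\mathcal{F}_{i-1}\}\1\{X_{i-1}\le x\}$ into a centred part and a drift.

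For the drift, the conditional means above give $n^{-1/2}\sum_i\mathbb{E}\{W^1_{\theta_0}\mid\mathcal{F}_{i-1}\}\1\{X_{i-1}\le x\}=n^{-1}\sum_i a_1(I_{i-1})\1\{X_{i-1}\le x\}$, which converges by the ergodic theorem to $\mathbb{E}\{a_1(I_{i-1})\1\{X_{i-1}\le x\}\}=\Psi^1_A(x)$; the $k=2$ drift converges to $\Psi^2_A(x)$, the residual $n^{-3/2}\sum_i a_1^2(I_{i-1})\1\{\cdots\}$ being $o_{\mathbb{P}}(1)$ (this is where a second moment bound on $a_1$ is convenient). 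For the centred part the crucial observation is that the $O(n^{-1/2})$ perturbations of $\mu$ and $\sigma^2$ alter the normalised conditional second moments of $\widetilde{W}_i$ only by $O(n^{-1/2})$, so the limits in Assumption~(A1) are unchanged. Hence the martingale functional central limit theorem invoked for Theorem~\ref{TheoremA} applies verbatim to this triangular array and yields $n^{-1/2}\sum_i\widetilde{W}_i\1\{X_{i-1}\le x\}\Rightarrow\mathbb{D}(x)$, the same Gaussian limit with covariance $K$.

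Next I would treat the estimation effect exactly as in Theorem~\ref{TheoremA}. Assumption~(A3) yields the linearisation $\widehat{\mathbb{D}}_n(x)=\mathbb{D}_n(x)+\{\sqrt{n}(\theta_n-\theta_0)\}^\top\,n^{-1}\sum_i\dot{W}_{\theta_0}(X_i,I_{i-1})\1\{X_{i-1}\le x\}+o_{\mathbb{P}}(1)$ uniformly in $x$, the remainder being dominated by $\sqrt{n}\|\theta_n-\theta_0\|\,\lambda_1(\|\theta_n-\theta_0\|)\,n^{-1}\sum_i M_1(X_i,I_{i-1})=o_{\mathbb{P}}(1)$. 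The ergodic theorem gives $n^{-1}\sum_i\dot{W}_{\theta_0}\1\{X_{i-1}\le x\}\to-\Gamma_{\theta_0}(x)$ (the $a_k/\sqrt{n}$ contributions and the martingale term $-2\epsilon_i\dot m$ both averaging out), while Assumption~(L1) gives $\sqrt{n}(\theta_n-\theta_0)=n^{-1/2}\sum_i\phi^\star(X_i,I_{i-1},\theta_0)+\xi_A+o_{\mathbb{P}}(1)$ with $n^{-1/2}\sum_i\phi^\star\Rightarrow\Theta$. Assembling via a joint martingale central limit theorem for the stacked array $(\widetilde{W}_i\1\{X_{i-1}\le x\},\phi^\star(X_i,I_{i-1},\theta_0))$ delivers the joint convergence of $(\mathbb{D}(\cdot),\Theta)$ with the same cross-covariance $G$ as in Theorem~\ref{TheoremA}; Slutsky then adjoins the deterministic drifts, giving $\widehat{\mathbb{D}}_n\Rightarrow\mathbb{D}-\Gamma^\top_{\theta_0}\Theta+\Psi_A-\Gamma^\top_{\theta_0}\xi_A=\widehat{\mathbb{D}}+\Psi_A-\Gamma^\top_{\theta_0}\xi_A$.

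The step I expect to be the main obstacle is the triangular-array nature of the argument: all quantities are computed under the sequence of alternatives, so the delicate point is to verify that the hypotheses behind the martingale functional central limit theorem, namely the conditional-variance convergence and Lindeberg conditions of Assumption~(A1), survive the $n^{-1/2}$-perturbation of the data-generating mechanism, uniformly in $n$, so that $\widetilde{W}_i$ retains the limiting covariance $K$ and the process stays tight. A clean alternative would be to establish contiguity of the local alternatives to the null and appeal to Le Cam's third lemma, but the direct decomposition above seems more economical under the assumptions as stated.
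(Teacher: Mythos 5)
Your proof is correct and follows essentially the same route as the paper: the identical decomposition of $\widehat{\mathbb{D}}_n$ into the (re)centred process, a drift term converging to $\Psi_A$ by the uniform ergodic law (the paper's Lemma~\ref{lem_ULLN}), the linearised estimation effect converging to $-\Gamma^\top_{\theta_0}(\Theta+\xi_A)$ via Assumption~(L1), and a Taylor remainder killed by Assumption~(A3). If anything, you are more explicit than the paper on two points it leaves implicit, namely the exact conditional means under $H_A$ (including the negligible $a_1^2(I_{i-1})/n$ term in the variance component, which indeed needs a mild extra moment bound on $a_1$) and the triangular-array verification that the centred process retains the null limit $\mathbb{D}$.
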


\section{Khmaladze transform of the process $\widehat{\mathbb{D}}_n$ }\label{sec3}
This section presents a martingale transformation of the Khmaladze type that will be applied to the process  $\widehat{\mathbb{D}}_n$ in order to achieve a distribution-free limit for classical test statistics. It worth mentioning that the Khmaladze transformation will ensure that the limit of the transformed process is Brownian motion.

To define such transformation, assume the functions  $\Gamma_{\theta_0}^k(x)$, for $k=1,2$, can be written as
\begin{equation}\label{Kh1}
   \Gamma_{\theta_0}^k(x)=\int_{-\infty}^x g_k(t)K_{k}(dt),
 \end{equation}
 where $g_k(t)=(g_{1,k}(t),\ldots, g_{d,k}(t))^\top$.
Let the matrix $A_k$ for $k=1,2$ be defined as $A_k(x)=\int_{x}^\infty g_k(u)g_k^\top(u){K_{k}}(du)$. Assume that there exists $x_0<\infty$ such $A_k(x_0)$ is non-singular. Note that it follows from the definition of $A_k(x)$ that $A_k(x)-A_k(x_0)$ is non-negative definite for all $x\le x_0$ implying that $A_k(x)$ is invertible for all $x\le x_0$ whenever $A_k(x_0)$ is invertible.

Following the concept in \cite{Khma}, for any function $\mathbf{f}$ consider the linear transformations $T^k(\mathbf{f})$ for $k=1,2$  given  by
\begin{eqnarray}\label{Trans.1}
(T^k{\mathbf{f}})(x)= {\mathbf{f}}(x)-\int_{-\infty}^x (g_k (y))^\top A_k^{-1} (y)\left[ \int_y^\infty g_k(z)  {\mathbf{f}}(dz)    \right] {K_{k}}(dy),
\end{eqnarray}
for all $x\le x_0$.
The Khmaladze transform of the cumulative residual process considered here is an application of the transformation $T^1$ and $T^2$ to the processes $\widehat{\mathbb{D}}^1_n$ and $\widehat{\mathbb{D}}^2_n$. It is formally defined  by

\begin{multline}\label{trsns.Dn.1}
T(\widehat{\mathbb{D}}_n(x)) =  \left( \begin{array}{l} T^1(\widehat{\mathbb{D}}^1_n(x))\\
T^2(\widehat{\mathbb{D}}^2_n(x))\end{array}\right)\\
=\left( \begin{array}{l}\widehat{\mathbb{D}}_n^1(x)-\int_{-\infty}^{x} (g_1(y))^\top A_1^{-1} (y)\left[ \int_y^\infty g_1(z) \widehat{\mathbb{D}}_n^1(dz)    \right] {K_{1}}(dy) \\
\widehat{\mathbb{D}}_n^2(x)-\int_{-\infty}^x (g_2(y))^\top A_2^{-1} (y)\left[ \int_y^\infty g_2(z) \widehat{\mathbb{D}}_n^2(dz)    \right] {K_{2}}(dy)
\end{array}\right).
\end{multline}
Easy manipulations show that for $k=1,2$
\begin{multline}\label{trsns.Dn.1.2}
T^k(\widehat{\mathbb{D}}^k_n(x))=\frac{1}{\sqrt{n}}\sum_{i=1}^n W^k_{\theta_n}(X_i,I_{i-1})
 \biggl[ \1{\left\{X_{i-1} \leq x \right\}} \\
 - \int_{-\infty}^{\min(x, X_{i-1})} (g_k(y))^\top  A_k^{-1} (y)
{K_{k}}(dy)  g_k(X_{i-1}) \biggr].
\end{multline}
  Observe also that $T^k(\widehat{\mathbb{D}}^k_n)(x)$ cannot be used in practice to build test statistics since it depends on several unknown quantities, namely,  $g_k$ and $K_{k}$. Replacing  $g_k$ and $K_{k}$ by their consistent estimates $\hat g_k$ and  $\hat{K}_{k}$ respectively, one obtains an empirical version of $T^k(\widehat{\mathbb{D}}^k_n)$ defined as follows
\begin{multline*}
T^k_n(\widehat{\mathbb{D}}^k_n(x))=\frac{1}{\sqrt{n}}\sum_{i=1}^n W^k_{\theta_n}(X_i,I_{i-1})
 \biggl[ \1{\left\{X_{i-1} \leq x \right\}}\\
  - \int_{-\infty}^{\min(x, X_{i-1})} (\hat g_k(y))^\top  \hat A_{k}^{-1} (y)
{\hat{K}_{k}}(dy)  \hat g_k(X_{i-1}) \biggr],
\end{multline*}
where $\hat A_k(x)=\int_{x}^\infty \hat g_k(u)\hat g_k^\top(u){\hat K_{k}}(du)$.
The Following  assumptions are needed to ensure the convergence of $T_n(\widehat{\mathbb{D}}_n)=(T^1_n(\widehat{\mathbb{D}}^1_n),T^2_n(\widehat{\mathbb{D}}^2_n))^\top$.\\

\noindent {\bf Assumption K1:} Assume there exists $x_0<\infty$  such that $A_k(x_0)$ is invertible for $k=1,2$.\\
\noindent {\bf Assumption K2:} Assume that $\mathbb{E}\|W_{\theta_0}^k(X_i,I_{i-1})g_k(X_{i-1})\|\}$, $\mathbb{E}\{M_1(X_i,I_{i-1})\|g_k(X_{i-1})\|\}$, $\mathbb{E}\{M_2(X_0)\|g(X_0)\|\}$, $\mathbb{E}\{ \|g(X)\|\}$, $\mathbb{E}\{ \|g(X)g(X)^\top\|W_{\theta_0}^k(X_i,I_{i-1})^2\}$ and $\mathbb{E}\{ \|g(X)g(X)^\top\|\}$ are all finite.\\
\noindent {\bf Assumption K3:} Assume that $\|\hat{g}_k-g_k\|$ converges to zero in probability,
and that $\sup_{y\in \R}\left\|\int_y^\infty (g_k(z)-\hat g_k(z)){\mathbb{D}}_n(dz)\right\|$ converges to zero in probability.\\
\noindent {\bf Assumption K4:} For each $k=1,2$ and for each $j=1,\ldots, d$,
\begin{enumerate}
\item $\mathbb{E} \left(| W^k_{\theta_0}( X_i ,I_{i-1})g_{j,k}(X_{i-1})|^{2}\right)<\infty $.
\item $\lim_{n\to \infty} \mathbb{E}\left\{ | W^k_{\theta_0}( X_i ,I_{i-1})g_{j,k}(X_{i-1})|^{2}\1\{|W^k_{\theta_0}( X_i ,I_{i-1})g_{j,k}(X_{i-1})|>\delta \sqrt{n}\}\right\}=0 $, for any $\delta>0$.
\item $\bar K_{j,k}(x)=\mathbb{E} \left( W^k_{\theta_0}( X_i ,I_{i-1})^2g_{j,k}(X_{i-1})^2\1\{X_{i-1}\le x\}\right)$ is non-decreasing continuous function of $x$.
\item $ \mathbb{E} \left( W^k_{\theta_0}( X_i ,I_{i-1})^2 g_{j,k}(X_{i-1})^2 \1\{x\le X_{i-1}\le y\}|\mathcal{F}_{i-2}\right)=C^j_{k,i}|\bar K_{j,k}(y)-\bar K_{j,k}(x)|$ with $C^j_{k,i}$ such that
$\frac{1}{n}\sum_{i=1}^n \mathbb{E}|C^j_{k,i}|=O(1)$.
\end{enumerate}

The next result establishes the weak convergence of $T_n(\widehat{\mathbb{D}}_n)$.
\begin{theorem}\label{Tn}
If assumptions A1--A4 and K1--K4 are satisfied  then for $k=1,2$
 $$T_n^k(\widehat{\mathbb{D}}_n^k) \rightsquigarrow {\mathbb{D}^k} \quad\text{in}\quad \mathbf{D}((-\infty, x_0]),$$
where $\mathbb{D}^k = \mathbb{W}_k\circ K_{k}$ is the $k$-th component in the process $\mathbb{D}$ defined in Theorem \ref{TheoremA}.
\end{theorem}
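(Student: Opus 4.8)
The plan is to pass from $T_n^k(\widehat{\mathbb D}_n^k)$ to the transform of the marked process built with the \emph{true} parameter and the \emph{true} pair $(g_k,K_k)$ in two reduction steps, and then to identify the limit through a functional central limit theorem for the resulting marked empirical process. First I would eliminate the plug-in estimators. Starting from the closed form \eqref{trsns.Dn.1.2}, the difference $T_n^k(\widehat{\mathbb D}_n^k)-T^k(\widehat{\mathbb D}_n^k)$ splits into pieces generated by $\hat g_k-g_k$, by $\hat A_k^{-1}-A_k^{-1}$, and by $\hat K_k-K_k$. Assumption (K1) keeps $A_k^{-1}(y)$ bounded for $y\le x_0$ (since $A_k(y)-A_k(x_0)$ is non-negative definite there), (K2) supplies the integrability needed to bound the $g_k$- and $K_k$-weighted sums, and (K3) controls exactly the delicate term $\sup_{y}\left\|\int_y^\infty(g_k(z)-\hat g_k(z))\mathbb D_n(dz)\right\|$; together these give $T_n^k(\widehat{\mathbb D}_n^k)=T^k(\widehat{\mathbb D}_n^k)+o_{\mathbb P}(1)$ uniformly on $(-\infty,x_0]$.

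Next I would remove the estimated parameter. Inserting the expansion $W^k_{\theta_n}=W^k_{\theta_0}+(\theta_n-\theta_0)^\top\dot W^k_{\theta_0}+R_n$ from (A3) into \eqref{trsns.Dn.1.2}, the remainder is negligible because $\|\theta_n-\theta_0\|=O_{\mathbb P}(n^{-1/2})$ by (A2) and $\lambda_1(\|\theta_n-\theta_0\|)\to0$. The linear term factors as $\sqrt n(\theta_n-\theta_0)^\top\,\frac1n\sum_{i}\dot W^k_{\theta_0}(X_i,I_{i-1})h_k(x,X_{i-1})$, where $h_k$ denotes the bracketed integrand in \eqref{trsns.Dn.1.2}; by ergodicity the average converges to $\gamma_k(x):=\lim_n\frac1n\sum_i\mathbb E[\dot W^k_{\theta_0}(X_i,I_{i-1})h_k(x,X_{i-1})]$. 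The decisive point is that $\gamma_k\equiv0$. Since under $H_0$ the limiting contribution of $\dot W^k_{\theta_0}$ is governed by the very measure $g_k(t)K_k(dt)$ that defines $\Gamma^k_{\theta_0}$ in \eqref{Kh1}, one gets $\gamma_k(x)=-\int h_k(x,t)\,g_k(t)\,K_k(dt)$; a Fubini interchange using $\int_y^\infty g_k(u)g_k(u)^\top K_k(du)=A_k(y)$ collapses this to $\Gamma^k_{\theta_0}(x)-\Gamma^k_{\theta_0}(x)=0$. This is precisely the annihilation identity $T^k(\Gamma^k_{\theta_0})=0$, i.e.\ the transform kills the estimation drift. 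Hence $T^k(\widehat{\mathbb D}_n^k)=T^k(\mathbb D_n^k)+o_{\mathbb P}(1)$ with $T^k(\mathbb D_n^k)(x)=n^{-1/2}\sum_i W^k_{\theta_0}(X_i,I_{i-1})h_k(x,X_{i-1})$.

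It then remains to find the limit of $T^k(\mathbb D_n^k)$, which is a marked empirical process with marks $W^k_{\theta_0}(X_i,I_{i-1})$ and bounded integrand $h_k(\cdot,X_{i-1})$. I would apply the same functional central limit theorem for stationary ergodic marked processes invoked in Theorem \ref{TheoremA} (Theorem 1 of \cite{escanciano2007b}); Assumption (K4) furnishes exactly its second-moment, Lindeberg, monotone-limit, and conditional-variance hypotheses for the transformed marks, yielding weak convergence in $\mathbf D((-\infty,x_0])$ to a centered Gaussian process with covariance $\int h_k(x,t)h_k(y,t)K_k(dt)$. Writing $h_k(x,t)=\1\{t\le x\}-\rho(\min(x,t))g_k(t)$ with $\rho(s)=\int_{-\infty}^s g_k(u)^\top A_k^{-1}(u)K_k(du)$ and expanding the product into four terms, repeated Fubini steps together with $A_k^{-1}(y)A_k(y)=\mathrm{Id}$ make the two cross terms and the quadratic term telescope down to $K_k(x\wedge y)$; this is the classical innovation/isometry property of the Khmaladze transform. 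A centered Gaussian process with covariance $K_k(x\wedge y)$ is exactly $\mathbb W_k\circ K_{k}=\mathbb D^k$, so chaining the three displays gives $T_n^k(\widehat{\mathbb D}_n^k)\rightsquigarrow\mathbb D^k$.

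I expect the first reduction to be the main obstacle: the plug-ins $\hat g_k$ and $\hat A_k^{-1}$ sit simultaneously inside a Stieltjes integral against $\widehat{\mathbb D}_n^k$ and inside an inverted matrix, and the control must hold \emph{uniformly} for $y$ up to $x_0$, where $A_k$ is closest to singular. This is where (K1) and the stochastic-integral bound in (K3) do the real work. By comparison, the annihilation identity of the second step and the covariance collapse of the third, although they are what make the limit distribution-free, are routine once the Fubini bookkeeping is arranged.
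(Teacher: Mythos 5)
Your proposal is correct in substance, but after the first reduction it follows a genuinely different route from the paper. Your step 1 coincides with the paper's Lemma \ref{Tn1} (same three error sources $\hat g_k-g_k$, $\hat A_k^{-1}-A_k^{-1}$, $\hat K_k-K_k$, same use of K1 for $\sup_{x\le x_0}\|A_k^{-1}(x)\|\le\|A_k^{-1}(x_0)\|$, same role for K3). From there the paper does \emph{not} Taylor-expand inside the transform: it invokes linearity and continuity of $T^k$ together with the continuous mapping theorem to get $T^k(\widehat{\mathbb{D}}_n^k)\rightsquigarrow T^k(\widehat{\mathbb{D}}^k)$ directly from Theorem \ref{TheoremA}, applies the annihilation identity $T^k(\Gamma^k_{\theta_0})=0$ \emph{at the limit} to conclude $T^k(\widehat{\mathbb{D}}^k)=T^k(\mathbb{D}^k)$, and then identifies the law of $T^k(\mathbb{D}^k)$ as $\mathbb{W}_k\circ K_k$ by computations in the style of \cite{stute-etal1998}. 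You instead perform the annihilation \emph{pre-limit}: inserting the (A3) expansion of $W^k_{\theta_n}$ into \eqref{trsns.Dn.1.2}, showing the drift coefficient $\gamma_k\equiv 0$ by the same Fubini--$A_k^{-1}A_k=\mathrm{Id}$ collapse, and then proving a functional CLT for $T^k(\mathbb{D}_n^k)$ with an explicit covariance computation reducing to $K_k(x\wedge y)$. The trade-off: the paper's route is shorter because it recycles Theorem \ref{TheoremA}, but it quietly relies on continuity of $T^k$ on $\mathbf{D}((-\infty,x_0])$, which is delicate since $T^k$ contains integrals of the form $\int_y^\infty g_k\,d\mathbf{f}$; your route avoids that issue entirely, at the cost of two points you state too casually. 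First, the convergence of $\frac1n\sum_i\dot W^k_{\theta_0}(X_i,I_{i-1})h_k(x,X_{i-1})$ must hold \emph{uniformly} in $x\le x_0$, which "by ergodicity" alone does not give; this is exactly what the paper's Lemma \ref{lem_ULLN} (parts i and ii) supplies, and you should cite it. Second, Theorem 1 of \cite{escanciano2007b} is stated for indicator-marked processes $n^{-1/2}\sum_i W_i\1\{X_{i-1}\le x\}$, not for sums indexed by the class $\{h_k(x,\cdot):x\le x_0\}$, so K4 does not "furnish exactly" its hypotheses for your transformed marks; the standard repair is to note that $T^k(\mathbb{D}_n^k)(x)=\mathbb{D}_n^k(x)-\int_{-\infty}^x g_k^\top(y)A_k^{-1}(y)\psi_n^{0,k}(y)K_k(dy)$ with $\psi_n^{0,k}(y)=n^{-1/2}\sum_i W^k_{\theta_0}(X_i,I_{i-1})g_k(X_{i-1})\1\{X_{i-1}>y\}$, so that A1 and K4 give joint weak convergence of the indicator-indexed pair $(\mathbb{D}_n^k,\psi_n^{0,k})$ (this is precisely how the paper uses K4, for tightness of $\psi_n^{0,k}$), after which your isometry computation identifies the limit. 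With those two patches your argument is a complete and self-contained alternative to the paper's proof.
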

\begin{remark}
Note that the process $\mathbb{D}$ has marginal $\mathbb{W}_1\circ K_{1}$ and $\mathbb{W}_2\circ K_{2}$ where $\mathbb{W}_1$ and $\mathbb{W}_2$ are Brownian motions. In general $\mathbb{W}_1$ and $\mathbb{W}_2$ are not independent. The distribution of $\int \| \mathbb{D}\|^2$ will only be distribution free if $\mathbb{W}_1$ and $\mathbb{W}_2$  are independent. This defeats the purpose of the martingale transformation  in general. Though the above results clearly show that the transformation provides a distribution-free limit for testing the mean or the variance function separately. The transformation is only useful for joint testing of conditional mean and variance functions only in the case of independent components of the bivariate process $\mathbb{D}$. This was  the case in the work of  \cite{chen-etal2015}. In general, the components $\mathbb{D}^1$ and $\mathbb{D}^2$  will be independent if  $K_{12}=0$ which translates to a condition on the third moment of $\varepsilon_i=X_i-m(\theta,I_{i-1})$. In particular,  if  $\mathbb{E}(\varepsilon_i^3|\mathcal{F}_{i-1})=0$ a.s. then the components $\mathbb{D}^1$ and $\mathbb{D}^2$  are independent. This holds true for all time series models that assume that the errors are i.i.d provided that we add a condition stating that the error is symmetric or just has zero third conditional moment.

\end{remark}
\begin{remark}
Note that the conditions imposed on the estimate $g_n$ are similar to those used in \cite{Ba} and \cite{BaNG}. Note also that the {$\|\hat{g}_k-g\|$} converges to zero in probability can be weakened to  {$\int \|\hat{g}_k(x)-g(x)\|^2 F(dx)$} converges to zero in probability. But for simplicity, the proof is presented with the stronger assumption.  The second condition on $\hat{g}_k$, in Assumption (K3), is usually verified using the structure of the estimator and the nature of the process $\mathbb{D}_n$, for details see the discussion in \cite{Ba}.
\end{remark}

\section{Test statistics}\label{sec4}
Test statistics for the hypothesis $(H_0)$ are obtained by considering continuous functional  $\mathcal{G}$ of the process  $\widehat{\mathbb{D}}_n$ or of its
Khmaladze transform $T_n(\widehat{\mathbb{D}}_n)$. That is, given a continuous functional $\mathcal{G}$ a test statistic can be obtained  using $\mathcal{G}\left(\widehat{\mathbb{D}}_n\right)$ or $\mathcal{G}\left(T_n(\widehat{\mathbb{D}}_n)\right)$. The continuous mapping theorem provides the asymptotic behavior of $\mathcal{G}\left(\widehat{\mathbb{D}}_n\right)$ and $\mathcal{G}\left(T_n(\widehat{\mathbb{D}}_n)\right)$.
The most commonly used functional, in practice, are of the  Cram\'er-von-Mises or Kolmogorov-Smirnov type. For simplicity, the rest of this paper only focuses on Cram\'er-von-Mises type statistics, but the approaches discussed apply to any continuous functional. To be specific, the following {marginal} Cram\'er-von-Mises test statistics
\begin{eqnarray}\label{SS1}
S_{n}^k &=& \int_{-\infty}^{\infty} \left[\widehat{\mathbb{D}}_n^k(x)\right]^2 F_n(dx)\\ \nonumber
&=&\frac{1}{n}\sum_{i=1}^n\sum_{j=1}^n \left\{1 - F_n(X_{i-1}\vee X_{j-1})\right\}
W^k_{\theta_n}(X_i,I_{i-1})W^k_{\theta_n}(X_j,I_{j-1})
\end{eqnarray}
for $k=1,2$, will be used here. Clearly, the statistics $\{S_{n}^k\}_{k=1,2}$  given in \eqref{SS1} converge in distribution to $S^k:=\int_{-\infty}^{\infty} [\widehat{\mathbb{D}}^k(x)]^2 F(dx)$ which is equivalent to $\int_{0}^{1}[\widehat{\mathbb{D}}^k(F^{-1}(u))]^2 du$.

The statistic ${S}_{n}^1$ can be used to test the conditional mean function and the statistic ${S}_{n}^2$ can be used to test the conditional variance function. To conduct a joint test of mean and variance functions one must combine ${S}_{n}^1$ and ${S}_{n}^2$. To combine both statistics, \cite{chen-etal2015} used ${S}_{n}^1+{S}_{n}^2$ and $\max({S}_{n}^1,{S}_{n}^2)$. Such combinations were adequate since  in his case   both statistics had the same limiting distribution. In our context, this is not true in general, hence to combine these test statistics we will need a more general  approach such as the ones discussed in  \cite{Ghoudi/Kulperger/Remillard:2001} and \cite{Genest/Remillard:2004}. To be specific,  define
$S_{n}^\star=S_{n}^1/L_1+S_{n}^2/L_2$, $S_{n}^\bullet=-2[\ln(P_v((S_{n}^1))+\ln(P_v(S_{n}^2))]$ and $S_{n}^\circ=\max(S_{n}^1/L_1,S_{n}^2/L_2)$, where
$L_k=\mathbb{E}(S^k)=\int K_k(x)F(dx)$ and $P_v (S^k_n)$ denotes the P-value of the statistics $S^k_n$.
According to \citet[ pp.~99-101]{Fisher:1950}, when $S^1$ and $S^2$ are independent $S_{n}^\bullet$ converges to Chi-square distribution with $4$ degrees of freedom and provides the optimal way of combining $S^1_n$ and $S^2_n$. It will be seen, in the simulation and application sections, that $S_{n}^\star$ and $S_{n}^\circ$ are easier to compute and  have similar power to $S_{n}^\bullet$.

\subsection{{Martingale transform based test statistics}}\label{MTA}
Next, we repeat a similar approach using the transformed process $T_n(\widehat{\mathbb{D}}_n)$ yielding
\begin{eqnarray}\label{CVMKhT}
	\widetilde{S}_{n}^k =\int_{-\infty}^\infty \frac{\left(T_n^k(\widehat{\mathbb{D}}_n^k)(x) \right)^2}{(\gamma^k_n)^2} {\hat{K}_{k}}(dx) = \frac{1}{n (\gamma^k_n)^2} \sum_{i=1}^n \left(T_n^k(\widehat{\mathbb{D}}_n^k)(X_{i-1}) \right)^2 W^k_{\theta_n}(X_i,I_{i-1})^2,
\end{eqnarray}
where
\begin{eqnarray}\label{Kkkn}
\hat{K}_{k}(x)&=& \lim_{n\to \infty}\frac{1}{n}\sum_{i=1}^n W^k_{\theta_n}(X_i,I_{i-1})^2  \1\left(  X_{i-1}\leq
x\right), \ x\in \mathbb{R} \mbox{ and } k=1,2.
\end{eqnarray}
and $\gamma_n^k=\lim_{x \to \infty}\hat{K}_{k}(x)=\frac{1}{n}\sum_{i=1}^n W^k_{\theta_n}(X_i,I_{i-1})^2$.
Observe that Theorem \ref{Tn} and the continuous mapping Theorem imply that, under the null hypothesis, $\tilde S_{n}^k$ converges in distribution to $\tilde{S}^k:=\int_0^1 {\mathbb{W}_k(u)^2} du$, where ${\mathbb{W}_k}$ is a standard Brownian motion. According to \cite{SW} (page 748), for $\widetilde{S}_{n}^k$ with $k=1,2$, the limiting critical values for  90\%, 95\% and 99\%  are 1.2, 1.657 and 2.8  respectively.

Since $\tilde{S}_{n}^1$  and $\tilde{S}_{n}^2$ admit the same limiting distribution, there is no need for the normalization by $L_1$ and $L_2$ in the definition of the combined statistics. Therefore, the combined statistics using the transformed process are defined as follows: $\tilde{S}_{n}^\star=\tilde{S}_{n}^1+\tilde{S}_{n}^2$, $\tilde{S}_{n}^\bullet=-2[\ln(P_v((\tilde{S}_{n}^1))+\ln(P_v(\tilde{S}_{n}^2))]$  and $\tilde{S}_{n}^\circ=\max(\tilde{S}_{n}^1,\tilde{S}_{n}^2)$.

 The statistics $\tilde S_{n}^\star$, $\tilde S_{n}^\bullet$ and $\tilde S_{n}^\circ$ will only be distribution-free if $\tilde{S}^1$ and $\tilde{S}^2$ are independent. As argued earlier, this is the case if the covariance term $K_{12}=0$ or more precisely if $\mathbb{E}\left\{(X_i-m(\theta_0,I_{i-1}))^3|\mathcal{F}_{i-1}\right\}=0$ a.s.

As mentioned above, the statistics $\{S_{n}^k\}_{k=1,2}$  given in \eqref{SS1} converge weakly towards $S^k=\int_{0}^{1}[\widehat{\mathbb{D}}^k(F^{-1}(u))]^2 du$. Since the process $\widehat{\mathbb{D}}^k$ has a complicated covariance,  which depends on the unknown distribution $F(\cdot)$ and some unknown parameters, the p-values of the test statistic $S_n^k$ need to be approximated using either numerical techniques or re-sampling algorithms. A numerical approximation of the limiting distribution, using the approach introduced by \cite{DM}, is outlined in Subsection \ref{secnumapp} and a re-sampling algorithm based on multipliers bootstrap is described in Subsection \ref{secmult}.

\subsection{Numerical approximation of the asymptotic distribution of the test statistics }\label{secnumapp}
To approximate the limiting distribution of the statistics $\{S_{n}^k\}_{k=1,2}$ and $S_n^\star$ we follow the  numerical integration procedure given by \cite{DM}. We will only outline how the procedure applies to $S_n^\star$ since the application to $\{S_{n}^k\}_{k=1,2}$ is similar and much simpler. The main idea consists in using numerical integration to approximate $S^\star=\int_{0}^{1}\left[ \widehat{\mathbb{D}}^1(F^{-1}(u))^2/L_1+ \widehat{\mathbb{D}}^2(F^{-1}(u))^2/L_2\right] du$ by the quadratic form $Q_m=\sum_{k=1}^m a_k\left[\widehat{\mathbb{D}}^1(F^{-1}(u_k))^2/L_1+\widehat{\mathbb{D}}^2(F^{-1}(u_k))^2/L_2\right]$,
where $u_k$ are the quadrature nodes, $a_k$  are the quadrature coefficients
and $m$ is the number of quadrature points. Because $\widehat{\mathbb{D}}$ is a Gaussian process, $Q_m$ is a quadratic
form of $(2m)$ normal random variables. Imhof's characteristic function inversion procedure (\cite{IM}) is used
to compute the distribution function of $Q_m$. When applying the procedure to $S_n^1$ or $S_n^2$, $Q_m$ will be a quadratic form with $(m)$ normal random variables. Note that in the computation of the characteristic function the covariance operator $\widehat{\mathcal{K}}$ is replaced by its empirical estimate obtained by replacing $K$, $\Gamma_{\theta_0}$, $G$ and $\Sigma_0$ by their consistent estimators $\hat K$, $\hat{\Gamma}$, $\hat{G}$ and $\hat{\Sigma}_0$, where $\hat{K}(t)$ is a symmetric two by two matrix whose diagonal entries are $\hat{K}_k$ for $k=1,2,$ and whose off diagonal entry $\hat{K}_{12}$ is given by
$$\hat{K}_{12}(t)=\frac{1}{n}\; \sum_{i=1}^n W_{\theta_n}^1(X_i,I_{i-1})W_{\theta_n}^2(X_i,I_{i-1}) \1{\{X_{i-1}\leq t\}},$$ and where
$$\hat{G}(t)=\frac{1}{n}\ \sum_{i=1}^n W_{\theta_n}(X_i,I_{i-1})\phi^\star(X_i,I_{i-1},\theta_n) \1{\{X_{i-1}\leq t\}},$$
$$\hat{\Gamma}(t)=\frac{1}{n} \sum_{i=1}^n \dot{W}_{\theta_n}(X_i, I_{i-1})\1{\{X_{i-1}\leq t\}}$$ and
$\hat{\Sigma}_0=\frac{1}{n}\ \sum_{i=1}^n\phi^\star(X_i,I_{i-1},\theta_n)\phi^\star(X_i,I_{i-1},\theta_n)^\top .$ We also used, in the simulation and application sections, the simplest quadratures, namely $a_k=1/m$ and $u_k=k/m$.

For the statistics $S_n^\bullet$ is computed after obtaining the p-values of statistics $S_n^1$ and $S_n^2$. The p-value of  $S_n^\bullet$ are obtained form the Chi-square distribution which is only valid when $S^1$ and $S^2$ are independent. The independence argument also allows for easy computation of the p-values of the statistics $S_n^\circ$. If the independence argument is not valid, then the computation for both $S_n^\bullet$ and $S_n^\circ$ are more complex and requires the numerical approximation of the joint distribution of $S_n^1/L_1$ and $S_n^2/L_2$.

\subsection{Multipliers bootstrap for the test statistics}\label{secmult}
Another approach to approximate the limiting distributions of $S^1_n, S_n^2, S_n^\star, S_n^\bullet$ and $S_n^\circ$ consists in using a re-sampling algorithm. To define such algorithm, let $B$ be a positive integer denoting the number of bootstrap samples. For $b=1,\dots, B$,  let  $\left( Z_{1,b}, \dots, Z_{n,b} \right)$ be a sequence of independent identically distributed random variables with mean zero and variance one independent of the sigma field generated by the $X_i$'s.  The multipliers bootstrap  technique applied to the empirical process $\mathbb{D}_n$, is defined as follows
\begin{eqnarray}\label{Dstar}
\mathbb{D}_{n,b}^{\star}(x) := \frac{1}{\sqrt{n}}\; \sum_{i=1}^n Z_{i,b} \biggl\{ W_{\theta_0}(X_i,I_{i-1}) \1{\{X_{i-1}\leq x\}}
-  \Gamma^\top_{\theta_0}(x)\phi^\star(X_i,I_{i-1},\theta_0)  \biggr\}.
\end{eqnarray}
Observe that the process $\mathbb{D}_n^{\star k}(x)$ defined in \eqref{Dstar} depends on unknown quantities, namely, $\theta$ and $\Gamma_{\theta}$. A plug-in estimator of $\mathbb{D}_n^{\star k}(x)$ is obtained by replacing $\theta$ and  $\Gamma_\theta$  by their consistent estimators $\theta_n$ and ${\Gamma}_{\theta_n}(x)$ yielding
\begin{eqnarray}\label{Dtilde}
\check{\mathbb{D}}_{n,b}(x) := \frac{1}{\sqrt{n}}\; \sum_{i=1}^n Z_{i,b} \biggl\{ W_{\theta_n}(X_i,I_{i-1}) \1{\{X_{i-1}\leq x\}}
- \Gamma^\top_{\theta_n}(x) \phi^\star(X_i,I_{i-1},\theta_n) \biggr\}.
\end{eqnarray}
The following extra assumptions are needed to establish the weak convergence of $\check{\mathbb{D}}_n^{k}$.\\
\noindent{\bf Assumption M1:} Assume that there exists $\delta>0$ such that for all $\theta$ satisfying $\|\theta-\theta_0\|\le \delta$ one has $\|\dot{W}^k_\theta(X_i,I_{i-1})- \dot{W}^k_{\theta_0}(X_i,I_{i-1}) \|\le \|\theta-\theta_0\|M_3(X_i,I_{i-1})$ where $M_3$ is a positive function satisfying $\mathbb{E}(M_3(X_i,I_{i-1}))\le C<\infty$.\\
\noindent{\bf Assumption M2:} Assume that there exists $\delta>0$ such that for all $\theta$ satisfying $\|\theta-\theta_0\|\le \delta$ one has $\|\phi^\star(X_i,I_{i-1},\theta)-\phi^\star(X_i,I_{i-1},\theta_0)-(\theta-\theta_0)^\top \dot{\phi}^\star(X_i,I_{i-1},\theta_0)\|\le \|\theta-\theta_0\|\lambda_3(\|\theta-\theta_0\|)M_4(X_i,I_{i-1})$ where $M_4$ and $\lambda_3$ are positive functions satisfying $\mathbb{E}(M_4(X_i,I_{i-1}))\le C <\infty$ and $\lim_{t\to 0}\lambda_3(t)=0$.\\
\noindent{\bf Assumption M3:} Assume that $E\|\dot{\phi}^\star(X_i,I_{i-1},\theta_0)\dot{\phi}^\star(X_i,I_{i-1},\theta_0)^\top\|\le C<\infty$.\\
The next result summarizes the asymptotic behavior of the bootstrapped process $\check{\mathbb{D}}_{n,b}$.
\begin{theorem}\label{theoremMB} Suppose that assumptions A1--A4 and M1--M3 hold true, then for $b=1,\ldots, B$,
$\check{\mathbb{D}}_{n,b}$ converge weakly to independent copies of $\widehat{\mathbb{D}} = \mathbb{D} - \Gamma^\top_{\theta_0}\Theta$.
\end{theorem}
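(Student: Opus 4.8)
The plan is to condition on the data and treat $\check{\mathbb{D}}_{n,b}$ as a sum of conditionally independent, mean-zero terms to which a multiplier central limit theorem applies. Write $\mathcal{X}=\sigma(X_i,\,i\in\mathbb{Z})$ for the $\sigma$-field generated by the observations and set $\eta_i(x)=W_{\theta_0}(X_i,I_{i-1})\1\{X_{i-1}\le x\}-\Gamma^\top_{\theta_0}(x)\phi^\star(X_i,I_{i-1},\theta_0)$, so that $\mathbb{D}_{n,b}^\star(x)=n^{-1/2}\sum_{i=1}^n Z_{i,b}\,\eta_i(x)$. I would first establish the statement for the idealized process $\mathbb{D}_{n,b}^\star$ of \eqref{Dstar} and then show that the plug-in remainder $\check{\mathbb{D}}_{n,b}-\mathbb{D}_{n,b}^\star$ is asymptotically negligible, uniformly in $x$.

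For the idealized process, conditionally on $\mathcal{X}$ the multipliers $Z_{i,b}$ are i.i.d.\ with mean zero and variance one and are independent across $b$, so the summands $Z_{i,b}\eta_i(x)$ are independent over $i$ and the $B$ processes are conditionally independent. The conditional finite-dimensional distributions follow from the Lindeberg--Feller theorem: for fixed $x_1,\dots,x_p$ the conditional covariance is the empirical Gram matrix $n^{-1}\sum_i \eta_i(x_j)\eta_i(x_l)^\top$, which by Birkhoff's ergodic theorem converges a.s.\ to $\mathbb{E}[\eta_0(x_j)\eta_0(x_l)^\top]$; expanding this expectation reproduces exactly $\mathcal{K}(x_j,x_l)$ of \eqref{def9} (the four terms yield $K$, $-\Gamma^\top_{\theta_0}G$, $-G^\top\Gamma_{\theta_0}$ and $\Gamma^\top_{\theta_0}\Sigma_0\Gamma_{\theta_0}$), while the conditional Lindeberg condition is supplied by the square-integrability in A1. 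Tightness (asymptotic equicontinuity) of the conditional process is obtained from multiplier and symmetrization inequalities applied to the Donsker indicator class $\{\1\{\,\cdot\le x\}\}$, together with the moment and continuity conditions in A1 (and A4). Because the limit law $\widehat{\mathbb{D}}$ does not depend on $\mathcal{X}$, conditional weak convergence in probability upgrades to unconditional weak convergence, and the conditional independence across $b$ yields joint convergence to independent copies of $\widehat{\mathbb{D}}$.

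It then remains to control $\check{\mathbb{D}}_{n,b}-\mathbb{D}_{n,b}^\star$, which collects the effect of replacing $\theta_0$ by $\theta_n$ in $W$, $\Gamma$ and $\phi^\star$. Splitting this difference into its $W$-part and its $\Gamma^\top\phi^\star$-part, I would Taylor-expand using A3, M1 and M2. The crucial observation is that, unlike in Theorem~\ref{TheoremA}, the mean-zero multipliers prevent the linear term from producing a drift: the leading contribution of the $W$-part equals $\sqrt{n}(\theta_n-\theta_0)^\top\big(n^{-1}\sum_i Z_{i,b}\dot W_{\theta_0}(X_i,I_{i-1})\1\{X_{i-1}\le x\}\big)$, whose first factor is $O_{\mathbb{P}}(1)$ by A2 and whose second factor is a normalized sum of conditionally centered terms, hence $O_{\mathbb{P}}(n^{-1/2})$ uniformly in $x$; the product is $o_{\mathbb{P}}(1)$. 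The higher-order remainder is bounded using $\|\theta_n-\theta_0\|\,\lambda_1(\|\theta_n-\theta_0\|)=o_{\mathbb{P}}(n^{-1/2})$ and $n^{-1}\sum_i|Z_{i,b}|M_1(X_i,I_{i-1})=O_{\mathbb{P}}(1)$, and the $\Gamma^\top\phi^\star$-part is handled analogously, invoking M3 and a uniform (Glivenko--Cantelli) consistency of $\Gamma_{\theta_n}(\cdot)$ for $\Gamma_{\theta_0}(\cdot)$. Since the correction term $-\Gamma^\top\phi^\star$ is already built into both $\check{\mathbb{D}}_{n,b}$ and $\mathbb{D}_{n,b}^\star$, no compensating drift is required and $\sup_x\|\check{\mathbb{D}}_{n,b}(x)-\mathbb{D}_{n,b}^\star(x)\|\to_{\mathbb{P}}0$.

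I expect the main obstacle to be the conditional tightness step, namely establishing asymptotic equicontinuity of the multiplier process in the strictly stationary, ergodic setting and then transferring the conditional-in-probability weak convergence into the unconditional joint statement about independent copies. Although conditioning on $\mathcal{X}$ renders the summands independent, so that multiplier-CLT machinery applies, verifying the conditional Lindeberg and equicontinuity conditions uniformly and justifying the passage from $d_{BL}\big(\mathcal{L}(\mathbb{D}_{n,b}^\star\mid\mathcal{X}),\mathcal{L}(\widehat{\mathbb{D}})\big)\to_{\mathbb{P}}0$ to the claimed convergence is the delicate part; by comparison, the plug-in estimate and the identification of the limiting covariance $\mathcal{K}$ are routine.
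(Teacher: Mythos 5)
Your proposal is correct in outline and shares the paper's master plan: the paper's proof also splits into $(a)$ showing $\sup_x\|\check{\mathbb{D}}_{n,b}(x)-\mathbb{D}^{\star}_{n,b}(x)\|=o_{\mathbb{P}}(1)$ and $(b)$ proving weak convergence of the idealized process $\mathbb{D}^{\star}_{n,b}$ of \eqref{Dstar}, and your treatment of $(a)$ --- Taylor expansions via A3, M1, M2, M3, with the mean-zero multipliers preventing any drift --- is essentially the paper's decomposition into the five remainders $R_{n,1},\ldots,R_{n,5}$. Where you genuinely diverge is part $(b)$: you run a conditional multiplier CLT (Lindeberg--Feller for the finite-dimensional distributions given the data, Birkhoff for the conditional covariance, symmetrization for conditional equicontinuity, then a bounded-Lipschitz transfer to the unconditional statement), whereas the paper never conditions. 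It observes, in effect, that $Z_{i,b}V_i(x)$ is a strictly stationary ergodic martingale-difference array with respect to the enlarged filtration (since the $Z_{i,b}$ are mean-zero, i.i.d.\ and independent of the data), so the unconditional weak convergence follows from the same marked-empirical-process CLT of \cite{escanciano2007b} already invoked for Theorem \ref{TheoremA}; the paper then merely computes $\mathrm{Cov}\bigl(\mathbb{D}^{\star}_{n,b}(x),\mathbb{D}^{\star}_{n,b}(y)\bigr)=\mathbb{E}\left(V_1(x)V_1(y)\right)$, the cross terms vanishing because $\mathbb{E}(Z_{i,b}Z_{j,b})=0$ for $i\neq j$, and matches it with \eqref{def9}; independence across $b$ comes for free from $\mathbb{E}(Z_{i,b}Z_{j,b'})=0$ for $b\neq b'$ together with Gaussianity of the limit. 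The obstacle you single out --- conditional equicontinuity and the conditional-to-unconditional transfer in a stationary ergodic, non-i.i.d.\ setting, where the textbook conditional multiplier theorems do not directly apply --- is real, and it is precisely what the paper's unconditional route sidesteps; your route would buy the stronger statement about the conditional bootstrap law given the data (which is what one actually simulates), at the price of that extra work.

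One local repair is needed in your remainder control: you assert that $n^{-1}\sum_{i=1}^n Z_{i,b}\dot{W}_{\theta_0}(X_i,I_{i-1})\1\{X_{i-1}\le x\}$ is $O_{\mathbb{P}}(n^{-1/2})$ uniformly in $x$. That rate holds pointwise, but uniformity at the $n^{-1/2}$ scale would require a maximal inequality and second moments of $\dot{W}_{\theta_0}$, which A3 does not supply (it only assumes $\mathbb{E}\|\dot{W}_{\theta_0}(X_i,I_{i-1})\|\le C<\infty$). What you actually need, and what the paper uses, is only the uniform $o_{\mathbb{P}}(1)$ (in fact almost sure) statement: apply Lemma \ref{lem_ULLN} to the strictly stationary ergodic integrable sequence $\chi_i=Z_{i,b}\dot{W}_{\theta_0}(X_i,I_{i-1})$, whose limit function is identically zero because $\mathbb{E}(Z_{1,b})=0$ and $Z_{1,b}$ is independent of the data; combined with $\sqrt{n}(\theta_n-\theta_0)=O_{\mathbb{P}}(1)$ from A2, this yields $\sup_x\|R_{n,1}(x)\|=o_{\mathbb{P}}(1)$ as required. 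With this substitution your argument goes through under the stated assumptions.
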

Note that the bootstrapped version of the statistic $S^k_n$ for $k=1,2$ is given by
$$S_{n,b}^k := \int_{-\infty}^{\infty}\check{\mathbb{D}}_{n,b}^{k}(x) ^2F_n(dx)=\frac{1}{n}\sum_{i=1}^n\sum_{j=1}^n Z_{i,b}Z_{j,b} \mathcal{M}^k_{i,j}$$ where
\begin{multline*}
\mathcal{M}^k_{i,j}=W^k_{\theta_n}(X_i,I_{i-1})W^k_{\theta_n}(X_j,I_{j-1})(1 - F_n(X_{i-1}\vee X_{j-1})) \\ -W^k_{\theta_n}(X_i,I_{i-1})\phi^\star(X_j,I_{j-1},\theta_n)^\top\mathcal{L}^k_n(X_{i-1})\\
-W^k_{\theta_n}(X_j,I_{j-1})\phi^\star(X_i,I_{i-1},\theta_n)^\top\mathcal{L}^k_n(X_{j-1})
+ \phi^\star(X_i,I_{i-1},\theta_n)^\top \mathcal{N} \phi^\star(X_i,I_{i-1},\theta_n),
\end{multline*}
where $\mathcal{L}_n^k(t):=\int_t^\infty \Gamma^k_{\theta_n}(u)F_n(du)$ and $\mathcal{N}:=\int \Gamma^k_{\theta_n}(u) \Gamma^k_{\theta_n}(u)^\top F_n(du)$.

Let $L_{n,k}= \sum_{i=1}^n\hat{K}_k(X_i)/n$ for $k=1,2$. To obtain a bootstrapped version of
$S_n^\star$ and $S_n^\circ$ one uses  $S_{n,b}^\star=S_{n,b}^1/L_{n,1}+S_{n,b}^2/L_{n,2}$ and $S_{n,b}^\circ=\max\{S_{n,b}^1/L_{n,1},S_{n,b}^2/L_{n,2}\}$ respectively.

To apply the multipliers bootstrap procedure to approximate the p-value of any of the Cram\'er-von Mises statistics $S^1_{n}$,$S^2_{n}$, $S^\star_{n}$ or $S^\circ_{n}$ one proceeds according to the following algorithm. The algorithm is explained in the context of $S^1_n$ but works in the same manner for any of the statistics mentioned above.
\begin{itemize}
	\item Estimate $\theta$ by $\theta_n$  and compute the test statistic $S^1_{n}$.
	\item For  $b\in \{ 1, \dots, B\}$
	\begin{itemize}
		\item Generate $(Z_{1,b},\ldots,Z_{n,b})$ as sequence of i.i.d normal $(0,1)$ random variables.
		\item Compute the bootstrapped Cram\'er-von Mises statistics $S_{n,b}^1$ for $b=1,\ldots, B$.
  \end{itemize}
\item Estimate the $p$-value, $Pv_1$ of $S^1_{n}$ by $Pv_1=\frac{1}{B}\sum_{b=1}^B \1{\{S_{n,b}^1 > S^1_{n} \}}$.
	\end{itemize}
The statistics $S_n^\bullet=-2\ln(Pv_1)-2\ln(Pv_2)$ where  $Pv_k$ for $k=1,2,$ are the estimated $p$-values given by the above procedure. The $p$-value of the statistics $S_n^\bullet$ are obtained using the Chi-square distribution with $4$ degrees of freedom.

\begin{remark}
Note that $\theta$ is estimated only once and the matrix $\mathcal{M}$ is only computed once which make the multipliers Bootstrap extremely fast to implement. This is much faster than the classical parametric bootstrap which requires re-estimating the parameters for each bootstrap iteration.  Moreover, Theorem \ref{theoremMB} and the continuous mapping Theorem imply that $\left(S_{n, b}^1, \dots, S_{n, B}^1 \right)$ converges in law to independent copies of $S^1$.
\end{remark}
\section{Finite sample performance}\label{sim}
This section presents several simulation experiments carried out to assess the power of the proposed test statistics.
The first experiment, described in Subsection \ref{exp1},  is designed to assess the fit of an ARCH(1) model. The second experiment, outlined in Subsection \ref{exp2}, assess the power of our test statistics for detecting departure from GARCH(1,1) model. The third experiment given in Subsection \ref{exp3} studies the behavior of the test when fitting and AR(1)-GARCH(1,1) model.
The fourth experiment, presented in Subsection \ref{exp4} outlines the finding of applying our test procedures to the case of stochastic differential equation models. Two sub-experiments are considered, in the first we test if the model has a linear drift and a constant diffusion while in the second we test if the model has linear drift and a diffusion proportional of square root of series.

\subsection{Testing ARCH(1) model}\label{exp1}
The purpose of the simulation experiment considered here is to test whether the data is generated from  pure ARCH(1) model or not. That is we wish to test
\begin{eqnarray}
 &(H_0):& \mu(\cdot)=0 \mbox{ and } \sigma^2(\cdot)=\alpha_0+\alpha_1 X_{t-1}^2 \
 \quad \mbox{ versus } \nonumber \\
&(H_1):& \quad \mu(\cdot)\neq 0 \mbox{ or }\sigma^2(\cdot)\neq \alpha_0+\alpha_1 X_{t-1}^2 .
 \end{eqnarray}
 The data is generated according to one of the following  models,
$X_t = \sqrt{h_t}\;\epsilon_t,
$
where $\epsilon_t$ are independent and identically distributed $N(0,1)$ random variables and
\begin{itemize}
\item[($M_0$)] $h_t=1.1 + 0.5 X_{t-1}^{2}$
\item[($M_1$)] $h_t = 1.1 +0.5 X_{t-1}^2 + 0.5 X_{t-1}$
\item[($M_2$)] $h_t = 1.1 +0.5 X_{t-1}^2 + 0.5\; \mbox{sign}(X_{t-1})$
\item[($M_3$)] $h_t = 1.1 +0.5 X_{t-1}^2 +  X_{t-1}$
\item[($M_4$)] $h_t = 1.1 +0.5 X_{t-1}^2 +  \mbox{sign}(X_{t-1}).$
\end{itemize}

A pure ARCH(1) model is fitted to the data. Then tests described in Section \ref{sec4} are applied using the $\phi^\star$ corresponding to maximum likelihood estimator of the ARCH parameters $\alpha_0$ and $\alpha_1$. We generated series of length $n=100, 300$ following models $M_0$--$M_4$.
The results given in Table \ref{tabarch1} summarize a Monte-Carlo simulation study with 2000 replications of tests with $5\%$ significance levels. Note that $M_0$ corresponds to the null hypothesis and as shown in the table, the test maintains its  level quite well.  Tests  based on Khamaladze transform are in general a bit less powerful than those based on the original process.
Table \ref{tabarch1} also shows that $S_n^1$ and $\tilde S_n^1$ have no detection power in this context. This is expected since there is no change in the mean function for all the alternative considered here. The power of the combined statistics $S_n^\star, S_n^\circ$ and $S_n^\bullet$ (or $\tilde S_n^\star, \tilde S_n^\circ$ and $\tilde S_n^\bullet$ ) are similar in general. The powers obtained using the multipliers procedure and those derived from numerical approximation are very close. This shows that both techniques provide very good estimation of p-values of these test statistics.
\begin{table}[hbt]
\caption{\label{tabarch1}{ Percentage of rejection of the null hypothesis (ARCH(1) model) when the data are generated according to models $M_0$ to $M_4$ and $n = 100$ and $300$.} }
\begin{center}
{\scriptsize
\begin{tabular}{cl|rrrrr|rrrrr|rrrrr}
\hline
& &\multicolumn{5}{c|}{Transformation technique}  &\multicolumn{5}{c|}{Multipliers procedure} & \multicolumn{5}{c}{Numerical approximation} \\
\cline{3-17}
n& DGP&  {$\tilde S_{n}^1$} &  {$\tilde S_{n}^2$} & $\tilde S_{n}^\star$ & $\tilde S_{n}^\circ$  & $\tilde S_{n}^\bullet$
 &{$ S_{n}^1$} &  {$ S_{n}^2$} & $ S_{n}^\star$ & $S_{n}^\circ$  & $ S_{n}^\bullet$
 &{$ S_{n}^1$} &  {$ S_{n}^2$} & $ S_{n}^\star$ & $S_{n}^\circ$  & $ S_{n}^\bullet$ \\
\hline
&$(M_0)$	&	3.7	&	4.2	&	3.9	&	4.0	&	4.0	&	4.3	&	3.6	&	3.7	&	3.8	&	4.0	&	4.7	&	3.3	&	3.6	&	3.8	&	3.8	\\
&$(M_1)$	&	4.5	&	20.8	&	15.0	&	14.5	&	15.1	&	5.3	&	21.2	&	12.0	&	10.8	&	14.0	&	5.5	&	18.4	&	11.6	&	10.2	&	13.3	\\
100&$(M_2)$ &	4.3	&	20.9	&	14.9	&	14.9	&	14.8	&	4.8	&	15.2	&	9.5	&	7.9	&	11.8	&	5.2	&	14.2	&	9.0	&	7.6	&	11.3	\\
&$(M_3)$	&	3.8	&	43.4	&	30.7	&	32.9	&	29.4	&	5.2	&	61.0	&	42.2	&	44.7	&	44.5	&	5.3	&	60.4	&	42.6	&	44.5	&	43.1	\\
&$(M_4)$	&	3.1	&	47.4	&	35.1	&	35.9	&	34.3	&	3.6	&	43.5	&	24.8	&	24.1	&	32.5	&	3.9	&	42.6	&	24.5	&	24.0	&	31.0	\\
\hline
&$(M_0)$	&	3.9	&	5.5	&	4.6	&	5.0	&	4.7	&	4.4	&	5.0	&	4.5	&	4.5	&	4.8	&	4.4	&	4.6	&	4.3	&	4.7	&	4.4	\\
&$(M_1)$	&	3.7	&	45.3	&	35.9	&	37.0	&	35.1	&	4.7	&	57.7	&	44.3	&	44.5	&	45.3	&	4.9	&	57.8	&	44.3	&	45.1	&	44.3	\\
300&$(M_2)$ &	3.8	&	40.2	&	31.7	&	31.6	&	30.5	&	4.6	&	38.0	&	23.8	&	22.2	&	29.3	&	4.1	&	37.9	&	23.6	&	21.2	&	27.6	\\
&$(M_3)$	&	3.5	&	63.1	&	54.6	&	56.4	&	53.9	&	4.3	&	92.1	&	88.9	&	89.8	&	87.0	&	4.2	&	92.2	&	89.3	&	90.0	&	86.9	\\
&$(M_4)$	&	3.6	&	68.0	&	59.2	&	60.4	&	58.4	&	4.3	&	76.0	&	64.6	&	64.7	&	66.9	&	4.4	&	75.9	&	64.7	&	64.9	&	66.7	\\
\hline
\end{tabular}
}
\end{center}
 \end{table}

\subsection{Testing GARCH(1,1) model}\label{exp2}
This section presents the result of a simulation study in which we assess the power of the tests, discussed in this manuscript, in detecting departure from a GARCH(1,1) model. That is we wish to test is the mean and variance functions are those of a GARCH(1,1) or not.
We use the same settings as in  Experiment 2 of \cite{escanciano2010}. More precisely, we generated the data from an AR(1)-GARCH(1,1) model with autoregressive coefficient $a_1$ varying from $-0.9$ to $0.9$. We fitted a GARCH(1,1) and recorded the percentage of rejection. As in all simulation reported in this manuscript we used $2000$ Monte-Carlo simulation iterations. For this experiment we used a sample size $n=100$ an in \cite{escanciano2010}. The results are reported in Table \ref{tabgarch}. Note that $a_1=0$ corresponds to the null hypothesis in this case. Table \ref{tabgarch} shows that the $5\%$ level is respected quite well in general. It also shows that the type of alternative considered here is mainly detected by the contribution of component $S_n^1$ or $\tilde S_n^1$ to the combined statistics. This makes all combined statistics quite power in detecting such alternatives.  The component $S_n^2$ and $\tilde S_n^2$ have no power against this type of alternatives.
This is expected since these components were designed to detect change in the variance function.
In this context statistics based on the transformed process seem to be a bit more powerful than those based on the original process. Comparing our results to Figure 1 of \cite{escanciano2010}, we notice that the power of the tests presented here is slightly better.

\begin{table}[hbt]
\caption{\label{tabgarch}{ Percentage of rejection of the null hypothesis (GARCH(1,1) model) when the data are generated according to AR(1)-GARCH(1,1) with autoregressive coefficient $a_1$.} }
\begin{center}
{\scriptsize
\begin{tabular}{cc|rrrrr|rrrrr|rrrrr}
\hline
& &\multicolumn{5}{c|}{Transformation technique}  &\multicolumn{5}{c|}{Multipliers procedure} & \multicolumn{5}{c}{Numerical approximation} \\
\cline{3-17}
n& {$a_1$ }&  {$\tilde S_{n}^1$} &  {$\tilde S_{n}^2$} & $\tilde S_{n}^\star$ & $\tilde S_{n}^\circ$  & $\tilde S_{n}^\bullet$
 &{$ S_{n}^1$} &  {$ S_{n}^2$} & $ S_{n}^\star$ & $S_{n}^\circ$  & $ S_{n}^\bullet$
 &{$ S_{n}^1$} &  {$ S_{n}^2$} & $ S_{n}^\star$ & $S_{n}^\circ$  & $ S_{n}^\bullet$ \\
\hline
&	-0.9	&	100	&	1.3	&	100	&	100	&	100	&	99.4	&	2.5	&	96.2	&	96.2	&	99.1	&	99.4	&	1.9	&	95.9	&	96.0	&	99.2	\\
&	-0.7	&	100	&	3.0	&	100	&	100	&	100	&	98.7	&	1.5	&	96.4	&	96.9	&	97.7	&	98.7	&	1.2	&	96.2	&	96.7	&	97.8	\\
&	-0.5	&	98.9	&	2.9	&	97.8	&	97.9	&	97.7	&	93.4	&	2.4	&	85.3	&	87.5	&	86.9	&	93.3	&	2.1	&	84.4	&	87.0	&	86.5	\\
&	-0.3	&	76.3	&	2.9	&	63.8	&	66.0	&	62.3	&	59.6	&	3.1	&	43.9	&	45.6	&	46.3	&	59.7	&	2.4	&	42.5	&	45.1	&	45.2	\\
100&	0.0	&	6.3	&	2.9	&	4.4	&	4.2	&	4.1	&	3.6	&	4.1	&	3.4	&	3.4	&	4.1	&	3.5	&	4.0	&	2.9	&	3.2	&	3.8	\\
&	0.3	&	68.1	&	2.1	&	54.2	&	57.9	&	52.4	&	42.4	&	2.2	&	26.8	&	31.2	&	26.7	&	42.4	&	1.5	&	25.4	&	30.1	&	24.7	\\
&	0.5	&	99.0	&	2.5	&	96.8	&	97.3	&	96.3	&	80.7	&	1.4	&	67.7	&	72.4	&	65.6	&	81.4	&	1.5	&	67.4	&	72.0	&	65.3	\\
&	0.7	&	100	&	1.5	&	100	&	100	&	100	&	87.9	&	0.5	&	79.8	&	83.9	&	77.8	&	87.9	&	0.4	&	79.8	&	83.6	&	77.5	\\
&	0.9	&	100	&	1.0	&	100	&	100	&	100	&	80.6	&	0.3	&	62.1	&	69.6	&	69.8	&	80.9	&	0.2	&	62.0	&	68.5	&	69.9	\\
\hline
&	-0.9	&	100	&	2.9	&	100	&	100	&	100	&	100	&	4.1	&	99.8	&	99.8	&	99.9	&	100	&	3.9	&	99.8	&	99.8	&	99.9	\\	
&	-0.7	&	100	&	4.7	&	100	&	100	&	100	&	100	&	3.8	&	99.9	&	100	&	100	&	100	&	3.1	&	100	&	100	&	100	\\	
&	-0.5	&	100	&	4.9	&	100	&	100	&	100	&	99.7	&	3.1	&	99.3	&	99.5	&	99.5	&	99.7	&	3.0	&	99.3	&	99.5	&	99.5	\\	
&	-0.3	&	99.3	&	4.6	&	98.4	&	98.6	&	98.2	&	95.6	&	3.1	&	91.5	&	92.9	&	90.9	&	95.8	&	2.7	&	91.8	&	92.9	&	91.0	\\	
300&	0.0	&	4.8	&	4.1	&	4.7	&	4.4	&	4.5	&	4.4	&	4.4	&	4.2	&	4.6	&	4.5	&	4.4	&	4.3	&	4.0	&	4.2	&	4.2	\\	
&	0.3	&	99.2	&	3.1	&	97.8	&	98.5	&	97.5	&	94.3	&	2.5	&	88.8	&	92.1	&	88.3	&	94.5	&	2.6	&	89.3	&	91.5	&	88.4	\\	
&	0.5	&	100	&	2.9	&	100	&	100	&	100	&	99.6	&	2.2	&	99.3	&	99.5	&	99.4	&	99.6	&	2.0	&	99.3	&	99.5	&	99.4	\\	
&	0.7	&	100	&	2.0	&	100	&	100	&	100	&	99.7	&	1.3	&	99.5	&	99.6	&	99.5	&	99.7	&	0.8	&	99.4	&	99.5	&	99.5	\\	
&	0.9	&	100	&	0.5	&	100	&	100	&	100	&	96.6	&	0.3	&	90.7	&	94.1	&	93.1	&	96.6	&	0.2	&	90.6	&	93.6	&	92.9	\\	
\hline
\end{tabular}
}
\end{center}
 \end{table}

\subsection{Testing AR(1)-GARCH(1,1) model}\label{exp3}
In this section we present the result of a study testing the null hypothesis that the mean and variance functions are those of an  AR(1)-GARCH(1,1) model. The same five alternatives considered in Experiment 3 of \cite{escanciano2010} are used. To be specific the data is generated according to one of these alternatives, and AR(1)-GARCH(1,1) model is fitted to the data. The test statistics proposed in this manuscript are applied and the percentage of rejection, in a $2000$ replicates Monte-Carlo simulation,  is reported in Table \ref{tabexp3}. The alternative considered are defined as follows:
\begin{description}
\item[A0:] The null hypothesis AR(1)-GARCH(1,1) model:  $X_t=0.02 + 0.02 X_{t-1}+\varepsilon_t$ where $\varepsilon_t=\sqrt{h_t}u_t$ with $h_t=0.08+0.1\varepsilon_{t-1}^2+0.85 h_{t-1}$ and $u_t$ i.i.d Normal random variables with mean zero and variance one.
\item[A1:] ARMA(1,1)-GARCH(1,1) model: $X_t=0.02 + 0.02 X_{t-1}+0.5\varepsilon_{t-1} +\varepsilon_t$ and $\varepsilon_t$ is as in model A0.
\item[A2:] TAR model: $X_t=0.6X_{t-1}+\varepsilon_t$ if $X_{t-1}\le 1$ and $X_t=-0.5X_{t-1}+\varepsilon_t$ if $X_{t-1}>1$ with $\varepsilon_t$ is as in model A0.
\item[A3:] EGARCH(1,1) model: $X_t=\sqrt{h_t}u_t$ where $\ln(h_t)=0.025 +0.5 \ln(h_{t-1})+0.25(|u_{t-1}|-\sqrt{2/\pi})-0.8u_{t-1}$  and  $u_t$ is as in model A0.
\item[A4:] Bilinear model: $X_t=0.6X_{t-1} +0.7 u_{t-1}X_{t-2}+u_t$ where   $u_t$ is as in model A0.
\item[A5:] Nonlinear Moving average model: $X_t=0.8u_{t-1}^2 +u_t$ where   $u_t$ is as in model A0.
\end{description}
\begin{table}[hbt]
\caption{\label{tabexp3}{ Percentage of rejection of the null hypothesis (AR(1)-GARCH(1,1) model) when the data are generated according to A0--A5.} }
\begin{center}
{\scriptsize
\begin{tabular}{cc|rrrrr|rrrrr|rrrrr}
\hline
& &\multicolumn{5}{c|}{Transformation technique}  &\multicolumn{5}{c|}{Multipliers procedure} & \multicolumn{5}{c}{Numerical approximation} \\
\cline{3-17}
n& DGP &  {$\tilde S_{n}^1$} &  {$\tilde S_{n}^2$} & $\tilde S_{n}^\star$ & $\tilde S_{n}^\circ$  & $\tilde S_{n}^\bullet$
 &{$ S_{n}^1$} &  {$ S_{n}^2$} & $ S_{n}^\star$ & $S_{n}^\circ$  & $ S_{n}^\bullet$
 &{$ S_{n}^1$} &  {$ S_{n}^2$} & $ S_{n}^\star$ & $S_{n}^\circ$  & $ S_{n}^\bullet$ \\
\hline
&A0&4.6	&	3.9	&	4.6	&	4.5	&	4.9	&	2.6	&	5.4	&	4.4	&	5.3	&	4.5	&	2.3	&	5.3	&	4.5	&	5.1	&	4.2	\\
&A1&3.6	&	4.1	&	3.4	&	3.8	&	3.6	&	1.4	&	4.3	&	3.2	&	3.5	&	2.9	&	1.3	&	3.8	&	2.9	&	3.4	&	2.8	\\
&A2&100	&	5.8	&	99.8	&	99.9	&	99.8	&	77.8	&	31.9	&	61.6	&	47.2	&	78.8	&	77.7	&	31.9	&	63.1	&	47.3	&	78.4	\\
300&A3&6.2	&	6.0	&	7.1	&	6.8	&	7.1	&	55.5	&	51.6	&	71.8	&	64.9	&	78.1	&	55.5	&	52.0	&	71.7	&	64.9	&	76.9	\\
&A4&68.3	&	5.3	&	59.0	&	60.3	&	57.7	&	13.6	&	27.6	&	30.5	&	27.2	&	32.2	&	13.0	&	27.4	&	31.1	&	27.3	&	31.7	\\
&A5&100	&	15.4	&	100	&	100	&	100	&	91.3	&	18.1	&	87.2	&	87.4	&	89.0	&	91.4	&	18.2	&	87.7	&	86.7	&	89.4	\\
\hline
&A0&	6.2	&	3.4	&	4.8	&	4.7	&	4.7	&	3.5	&	4.4	&	3.8	&	4.1	&	3.9	&	3.2	&	4.7	&	3.9	&	4.5	&	3.8	\\
&A1&	4.0	&	4.9	&	3.7	&	3.8	&	3.6	&	2.2	&	5.0	&	4.2	&	4.5	&	3.3	&	2.2	&	4.7	&	3.8	&	4.4	&	3.1	\\
&A2&	100	&	11.2	&	100	&	100	&	100	&	98.2	&	63.7	&	96.0	&	91.0	&	98.4	&	98.1	&	63.6	&	96.1		&91.0	&	98.4	\\
600&A3&	4.8	&	13.7	&	10.8	&	11.2	&	10.6	&	80.1	&	69.6	&	92.6	&	90.5	&	94.6	&	80.7	&	70.0	&	92.6	&	90.4	&	94.8	\\
&A4&	88.4	&	6.4	&	84.0	&	84.4	&	83.3	&	30.7	&	41.7	&	51.6	&	44.0	&	56.6	&	29.5	&	41.1	&	50.8	&	43.3	&	55.9	\\
&A5&	100	&	40.9	&	100	&	100	&	100	&	97.6	&	50.2	&	97.5	&	97.2	&	97.9	&	97.6	&	51.0	&	97.5	&	97.3	&	97.8	\\
\hline
\end{tabular}
}
\end{center}
 \end{table}
Comparing with Table 2 in \cite{escanciano2010}, one notice that combined tests introduced in this manuscript are more powerful than the tests considered in \cite{escanciano2010}. The only exception being the case of alternative A1: ARMA(1,1)-GARCH(1,1) where our tests were not able to detect such alternative while those in \cite{escanciano2010} had reasonable power.
We also notice that  tests based on the transformed process are bit more powerful in the case of alternatives A2, A4 and A5. On the other hand these tests fall way behind in the case A3. This seems to concord with the observations from the first simulation experiment where tests based on  the transformed process were a bit more powerful in detecting a change in the mean function. In fact, for the three alternatives A2, A4 and A5 there is a change in the conditional mean function. While in the case of alternative A3, the conditional mean function is still that of an AR(1) while the change occurred in the conditional variance function.

\subsection{Testing  stochastic differential equation models}\label{exp4}
In this part, we are interested in testing whether the data fits a specified  stochastic differential equation (SDE) model. In finance continuous time models are widely used to study the dynamic of some financial products such as asset prices, interest rate and bonds. Continuous-time modelling in finance was introduced by  \cite{Bach} on the theory of speculation and really started with \cite{merton1970} seminal work. Since then several models were developed which can be formulated by the following general SDE
\begin{eqnarray}\label{SDE}
dX_t = \mu_\theta(X_t) dt + \sigma(X_t) dW_t,
\end{eqnarray}
where $W_t$ is a standard Brownian motion. The   drift $\mu_\theta(\cdot)$ and the diffusion $\sigma^2(\cdot)$ are known functions. Several well-known models in financial econometrics, including \cite{BS}, \cite{Vas}, \cite{CIR}, \cite{CKLS} and \cite{ait}, among others, can be written under the form \eqref{SDE} with a specific form of drift and diffusion functions. Below  the list of SDE models considered here.
\begin{itemize}
\item[N1:] \texttt{Vasicek}: $dX_t = 3(10-X_t) dt + 5dW_t, \quad x_0 = 0.03$
\item[N2:] \texttt{Hyperbolic}: $dX_t = 5\frac{X_t}{\sqrt{1+X_t^2}} dt + 5dW_t\quad x_0=3$
\item[N3:] \texttt{CIR}: $dX_t = (1+4.5 X_t)dt + 0.75\sqrt{X_t}dW_t, \quad x_0=3$
\item[N4:] \texttt{CKLS1}: $dX_t = 1.5(1 - X_t) dt + 1.5 X_t^{0.8}dW_t, \quad x_0 =5$
\item[N5:] \texttt{CKLS2}: $dX_t = 1.5(1 - X_t) dt + 0.5 X_t^{1.5}dW_t, \quad x_0 =5$
\item[N6:] \texttt{A\"it-Sahalia}: $dX_t = (1 + 15 X_t +0.25 X_t^{-1} - 2X_t^2) dt +0.5 X_t^{1.5}dW_t, \quad x_0=5$
\end{itemize}
In practice, the diffusion process $\{X_t\}$ is observed at instants $\{t =i \Delta  | i=0, \dots, n\}$, where $\Delta >0$ is generally small, but fixed as $n$ increases. For instance the series could be observed hourly, daily, weekly or monthly. Therefore, we may model these discretely observed measurements using time series models. In fact, despite the fact that \eqref{SDE} is written in a continuous-time form, one often uses the following Euler discretization scheme to get
\begin{eqnarray}\label{SDE2}
X_{t+\Delta} - X_t = \mu_\theta(X_t) \Delta + \sigma(X_t)\left(W_{t+\Delta} - W_t \right), \quad t=0, \Delta, 2\Delta, \dots,
\end{eqnarray}
 as an approximation that facilitates computational and theoretical derivation. The accuracy of such Euler discretisation is studied in \cite{JP}.  The purpose of this simulation study is to generate processes satisfying the SDEs described by models N1--N6 above, then use the procedures described in Section \ref{sec4} to test whether the data fits a specific type of SDE models.
 Two types of null hypotheses will be considered. In the first sub-experiment we test if the data fits a \texttt{Vasicek} type SDE model, namely we test if $\mu$ is of the form $a+b X_{t-1}$ for some parameters $(a,b)\in \mathbb{R}^2$ and that $\sigma=c$ for some parameter $c>0$. Note that only model N1 belongs to this class of models. In the second sub-experiment the null hypothesis considered is a \texttt{CIR} model where $\mu$ is of the form  $a+bX_{t-1}$ and $\sigma$ of the form $c\sqrt{X_{t-1}}$. For this second sub-experiment, N3 belongs to the null hypothesis.

 We suppose the process is observed over the time interval $[0, T]$ and $n$ corresponds to the number of instants where the process is observed.  The sampling mesh in such case is $\Delta=T/n$. In order to assess the sensitivity of our test to the sampling frequency from the underlying process, we consider $T=1$ and $\Delta= 0.002, 0.005$ and $0.01$, corresponding to a total sampling instants $n=500, 200$ and $100$ respectively.  Such framework supposes that collecting more observations means shortening the time interval between successive existing observations, not lengthening the time period over which data are recorded.\\
 The data is generated according to one of  SDE models listed above. The  parameters  are estimated, after discretization, using the maximum pseudo-likelihood method (see \cite{ait2}). Table \ref{tabSDE1} and \ref{tabSDE2}  report the percentage of rejection of the two null hypotheses considered. The results are  obtained using $1000$ Monte-Carlo replications for different values of sampling mesh $\Delta$.

For the first sub-experiment, Table \ref{tabSDE1} shows that the tests are quite powerful in detecting change in the diffusion term. Alternatives N3--N6 are rejected with high probability even for $n=100$. The three combined statistics are more or less similar in term of their detection power. As expected the statistics $S_n^1$ or $\tilde S_n^1$ do no detect change in the diffusion term. All statistics fail to detect the alternative N2 even for large sample sizes.

For the second sub-experiment, reported in Table \ref{tabSDE2}, tests based on the original process and using either the multipliers procedure or the numerical approximation do respect their levels more appropriately than those based on the transformed process.  The  three combined statistics have the power to detect all alternatives considered here with N5 and N6 being easier to detect. This  could be explained by the large change in the diffusion functions between N5 or N6 and the null hypothesis in this case N3.
\begin{table}[hbt]
 \caption{\label{tabSDE1}{Percentage of rejection of the null hypothesis $(H_0)$ (Vasicek type Model) for different sample sizes $n=100, 200$ and $500$.}}
\begin{center}
{\scriptsize
\begin{tabular}{cc|rrrrr|rrrrr|rrrrr}
\hline
& &\multicolumn{5}{c|}{Transformation technique}  &\multicolumn{5}{c|}{Multipliers procedure} & \multicolumn{5}{c}{Numerical approximation} \\
\cline{3-17}
n& DGP &  {$\tilde S_{n}^1$} &  {$\tilde S_{n}^2$} & $\tilde S_{n}^\star$ & $\tilde S_{n}^\circ$  & $\tilde S_{n}^\bullet$
 &{$ S_{n}^1$} &  {$ S_{n}^2$} & $ S_{n}^\star$ & $S_{n}^\circ$  & $ S_{n}^\bullet$
 &{$ S_{n}^1$} &  {$ S_{n}^2$} & $ S_{n}^\star$ & $S_{n}^\circ$  & $ S_{n}^\bullet$ \\
\hline
&	N1	&	6.1	&	4.1	&	4.7	&	4.6	&	5.1	&	6.1	&	4.1	&	3.8	&	4.1	&	4.7	&	5.2	&	3.7	&	3.5	&	3.6	&	4	\\
&	N2	&	5.1	&	5.9	&	5.2	&	4.9	&	5.3	&	6.6	&	4.2	&	4.2	&	4.4	&	5.9	&	5.8	&	4.1	&	4.3	&	4.1	&	4.5	\\
100&	N3	&	2.4	&	95	&	84.8	&	87.8	&	82.5	&	4.9	&	98.4	&	98.3	&	98.4	&	94.1	&	4.9	&	98.6	&	98.3	&	98.6	&	93.3	\\
&	N4	&	6.1	&	84.4	&	73.1	&	76.1	&	70.9	&	4	&	90.7	&	90	&	90.7	&	79.4	&	3.7	&	90.8	&	90.2	&	90.8	&	77.3	\\
&	N5	&	9.2 &92.4& 82.8& 82.9& 81.9& 4.9& 91.5& 90.6& 91.5& 80.7& 5.1& 92.3& 91.5& 92.2& 79.9\\
& N6  & 11.3 &88.9& 80.3& 79.3& 78.6& 10.2& 91.1& 89.4& 91.1& 77.9& 9.3& 91.3& 90& 91.3& 75.9 \\
\hline
&N1	&	5.2	&	5.5	&	5.4	&	5.2	&	5.2	&	6.9	&	4.7	&	4.5	&	5.2	&	5.4	&	7.6	&	4.6	&	4	&	4.8	&	4.7	\\
&N2	&	5.9	&	6.1	&	4.9	&	5.6	&	4.8	&	6.5	&	3.9	&	3.8	&	3.9	&	5.4	&	6.3	&	3.9	&	4.1	&	4	&	4.7	\\
200	&N3	&	4.1	&	100	&	99.8	&	99.9	&	99.5	&	5.6	&	100	&	100	&	100	&	99.6	&	5.7	&	100	&	100	&	100	&	99.7	\\
	&N4	&	7.5	&	96.5	&	94.7	&	94.4	&	94.1	&	4.1	&	98.2	&	98.2	&	98.2	&	96.1	&	3.9	&	98.2	&	98.1	&	98.2	&	95.8	\\
&	N5	& 10.7 &99.5& 97.8& 98.7& 97.4& 5.1& 98.9& 98.7& 98.9& 95.9& 4.8& 98.7& 98.6& 98.7& 96.8\\
&   N6  & 11.4 &99.4& 98.3& 98.7& 97.8& 6& 99.6& 99.6& 99.6& 97& 6.3& 99.6& 99.5& 99.6& 97.3\\
\hline		
	&N1	&	4.9	&	5.5	&	5.3	&	5.4	&	6.4	&	8.2	&	4.8	&	5.6	&	5.1	&	7.7	&	8.3	&	4.3	&	5.5	&	4.7	&	6.3	\\
	&N2	&	5.1	&	6.4	&	5.6	&	5.7	&	5.8	&	7.1	&	4.3	&	4.6	&	4.4	&	5.3	&	6.8	&	4.4	&	4.6	&	4.5	&	4.7	\\
500	&N3	&	4.1	&	100	&	100	&	100	&	100	&	5.5	&	100	&	100	&	100	&	100	&	5.2	&	100	&	100	&	100	&	100	\\
	&N4	&	8.6	&	100	&	100	&	100	&	100	&	6.4	&	100	&	100	&	100	&	100	&	6	&	100	&	100	&	100	&	100	\\
	&N5	&	11.8 &100 &100 &100 &100 &  4.4 &100 & 99.8 & 99.9 & 99.9 &  4.4 &100 & 99.8 & 99.8 & 99.9 \\
	&N6	&   13.7 &100 &100 &100 &100 &  7.6 &100 &100 &100 & 99.9 &  7.8 &100 &100 &100 & 99.9 \\
\hline
  \end{tabular}
  }
  \end{center}
  \end{table}

\begin{table}[hbt]
 \caption{\label{tabSDE2}{Percentage of rejection of the null hypothesis $(H_0)$ (CIR type Model) for different sample sizes $n=100, 200$ and $500$.}}
\begin{center}
{\scriptsize
\begin{tabular}{cc|rrrrr|rrrrr|rrrrr}
\hline
& &\multicolumn{5}{c|}{Transformation technique}  &\multicolumn{5}{c|}{Multipliers procedure} & \multicolumn{5}{c}{Numerical approximation} \\
\cline{3-17}
n& DGP &  {$\tilde S_{n}^1$} &  {$\tilde S_{n}^2$} & $\tilde S_{n}^\star$ & $\tilde S_{n}^\circ$  & $\tilde S_{n}^\bullet$
 &{$ S_{n}^1$} &  {$ S_{n}^2$} & $ S_{n}^\star$ & $S_{n}^\circ$  & $ S_{n}^\bullet$
 &{$ S_{n}^1$} &  {$ S_{n}^2$} & $ S_{n}^\star$ & $S_{n}^\circ$  & $ S_{n}^\bullet$ \\
\hline
&	N1	&	4.8	&	50.1	&	38.9	&	39.6	&	38.9	&	4.4	&	30.9	&	27.1	&	30.5	&	23.2	&	3.6	&	31.4	&	26	&	29.7	&	20.7	\\
&	N2	&	4.9	&	63.3	&	53.9	&	53.5	&	52.9	&	4.3	&	42.2	&	37.5	&	41.3	&	30.4	&	4	&	42.3	&	36	&	40.4	&	28.9	\\
100&N3	&    4.1	&	10.2	&	7.3	&	8.2	&	6.4	&	4.5	&	3.9	&	3.8	&	4.2	&	4.9	&	4	&	4.9	&	3.8	&	4.5	&	4.4	\\
&	N4	&	5.4	&	10.5	&	9.4	&	7	&	10.1	&	3.8	&	24.6	&	19.2	&	23.6	&	16.5	&	3.3	&	25.5	&	19.2	&	23.8	&	16.3	\\
&	N5	&  10.6& 54& 40.7& 35.7& 41.7& 5.8& 70.1& 65.9& 69.7& 52.2& 5.3& 71.9& 66.1& 70.7& 51.7 \\
& N6 & 10.2& 51.1& 42.9& 37.5& 44& 7.6& 53.8& 50.5& 53.7& 39.5& 6.4& 55.2& 50& 55& 37.6 \\
\hline 	
&   N1	&5.4	&	71.2	&	62.2	&	63.6	&	61	&	5.5	&	55.7	&	50.8	&	52.7	&	45.1	&	4.7	&	54.9	&	49.6	&	52.1	&	44.4	\\
&   N2	&4.9	&	85.5	&	79.9	&	80.6	&	79.1	&	5.8	&	77.6	&	70.9	&	75.8	&	64.3	&	5.2	&	76.1	&	70	&	74.7	&	63.9	\\
200	&N3 &5.9	&	9.8	&	8.5	&	9.6	&	8.2	&	5.3	&	4.7	&	4.5	&	4.5	&	5.3	&	5.7	&	4.6	&	4.4	&	4.1	&	4.5	\\
	&N4	&	6.3	&	24.1	&	19.3	&	17.5	&	19.7	&	3.9	&	51.4	&	46.6	&	49.6	&	39.7	&	4.2	&	50.4	&	45.5	&	49.1	&	38.5	\\
&	N5  & 11.7 & 88.2 & 79.2 & 77.5 & 79.0 &  5.1 & 95.4 & 95.1 & 95.3 & 89.0 &  4.6 & 95.4 & 95.1 & 95.4 & 88.3 \\
&  N6&  12.2 & 84.3 & 75.7 & 73.3 & 74.8 &  7.1 & 87.7 & 87.0 & 87.7 & 76.9 &  6.8 & 88.1 & 86.8 & 88.1 & 77.6 \\
\hline		
	&N1	&	7.6	&	94.3	&	91.6	&	91.8	&	90.9	&	6.7	&	88.4	&	85.2	&	87	&	84.2	&	6.9	&	89.1	&	85.6	&	86.9	&	83.5	\\
	&N2	&	7.1	&	98.4	&	97.1	&	96.8	&	97.3	&	4.6	&	96.3	&	94.7	&	96	&	93	&	4.4	&	96.5	&	94.5	&	96	&	92.2	\\
500	&N3 &4.8	&	9.3	&	7.8	&	8	&	7.6	&	4.6	&	4.3	&	3.9	&	4.3	&	3.9	&	4.5	&	3.8	&	3.4	&	3.8	&	4	\\
	&N4	&	7.5	&	64.8	&	56.3	&	54.5	&	55.9	&	4.3	&	82.1	&	81.3	&	81.9	&	75.9	&	4.3	&	82	&	81.5	&	81.6	&	75.7	\\
	&N5	& 13.0 & 99.9 & 99.4 & 99.3 & 99.3 &  6.0 &100.0 & 99.7 & 99.8 & 99.8 &  5.7 & 99.7 & 99.7 & 99.7 & 99.6 \\
& N6  & 14.9 & 99.7 & 99.7 & 99.4 & 99.6 &  9.2 & 99.8 & 99.7 & 99.8 & 99.5 &  8.7 & 99.8 & 99.7 & 99.8 & 99.5 \\
\hline
  \end{tabular}
  }
  \end{center}
  \end{table}

\section{Application to interest rate data}\label{app}
This section presents as application the testing methodology to interest rate dynamics. Monthly values of interest rate for a maturity of 1 months (\% per year) from July 1964 to April 1989 are considered. The monthly interest rate $(r_t)$ dynamic is displayed in Figure \ref{TS}. Figure \ref{change} is a scatter plot of monthly changes in interest rate against the previous month's rate and it clearly shows an obvious heteroscedasticity, with the range of changes increasing significantly as the level of interest rates increases. An augmented Dickey-Fuller test of non stationarity of interest rate data is performed and the non stationarity hypothesis is rejected, with p-value equal to 0.01, for 1 to 12 months lagged series.
\begin{figure}[hbt]
\centering
\includegraphics[width=8cm, height=6cm]{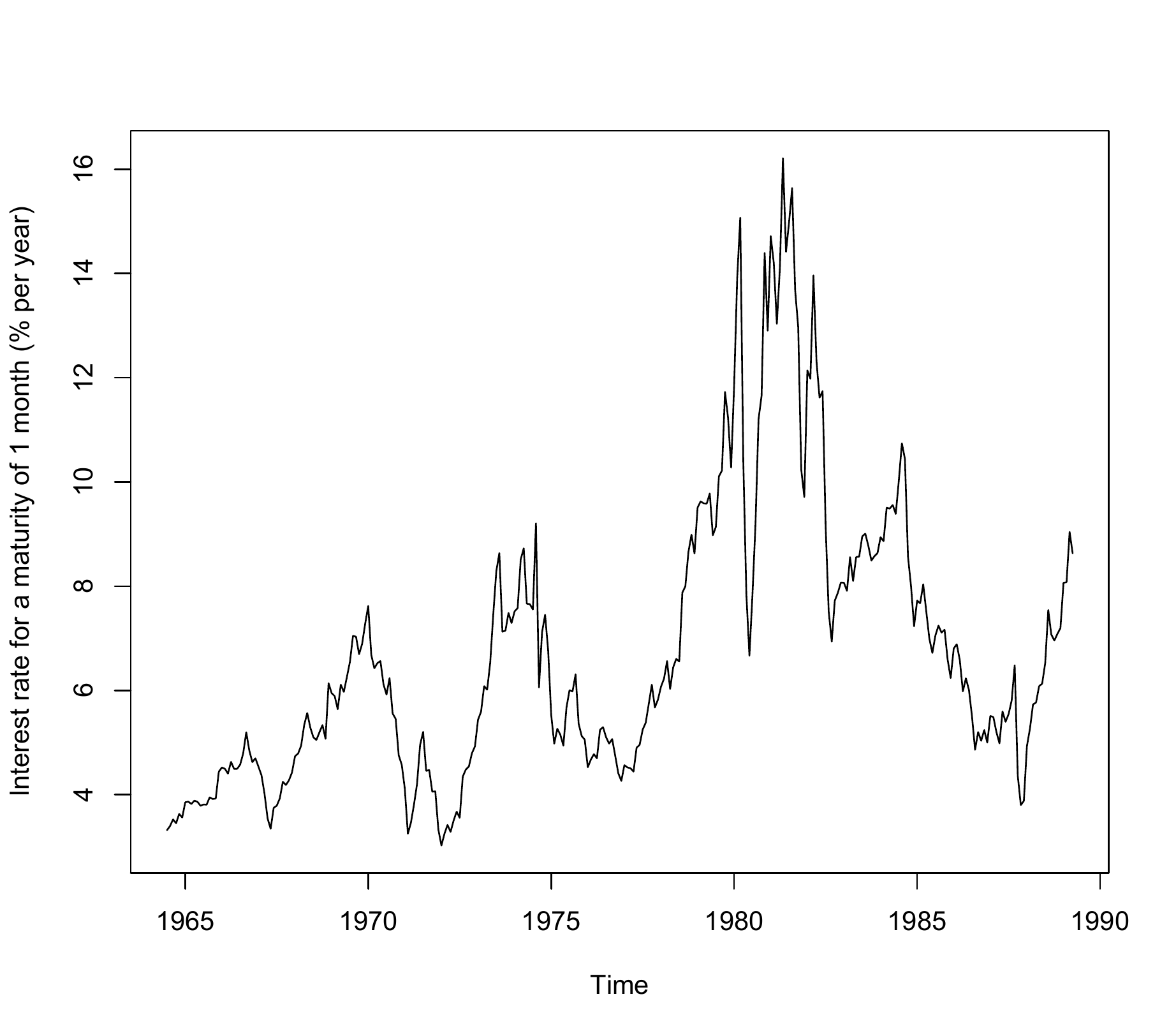}
\caption{Interest rate for a maturity of 1 months (\% per year) from July 1964 to April 1989. }
\label{TS}
\end{figure}
\begin{figure}[hbt]
\centering
\includegraphics[width=10cm]{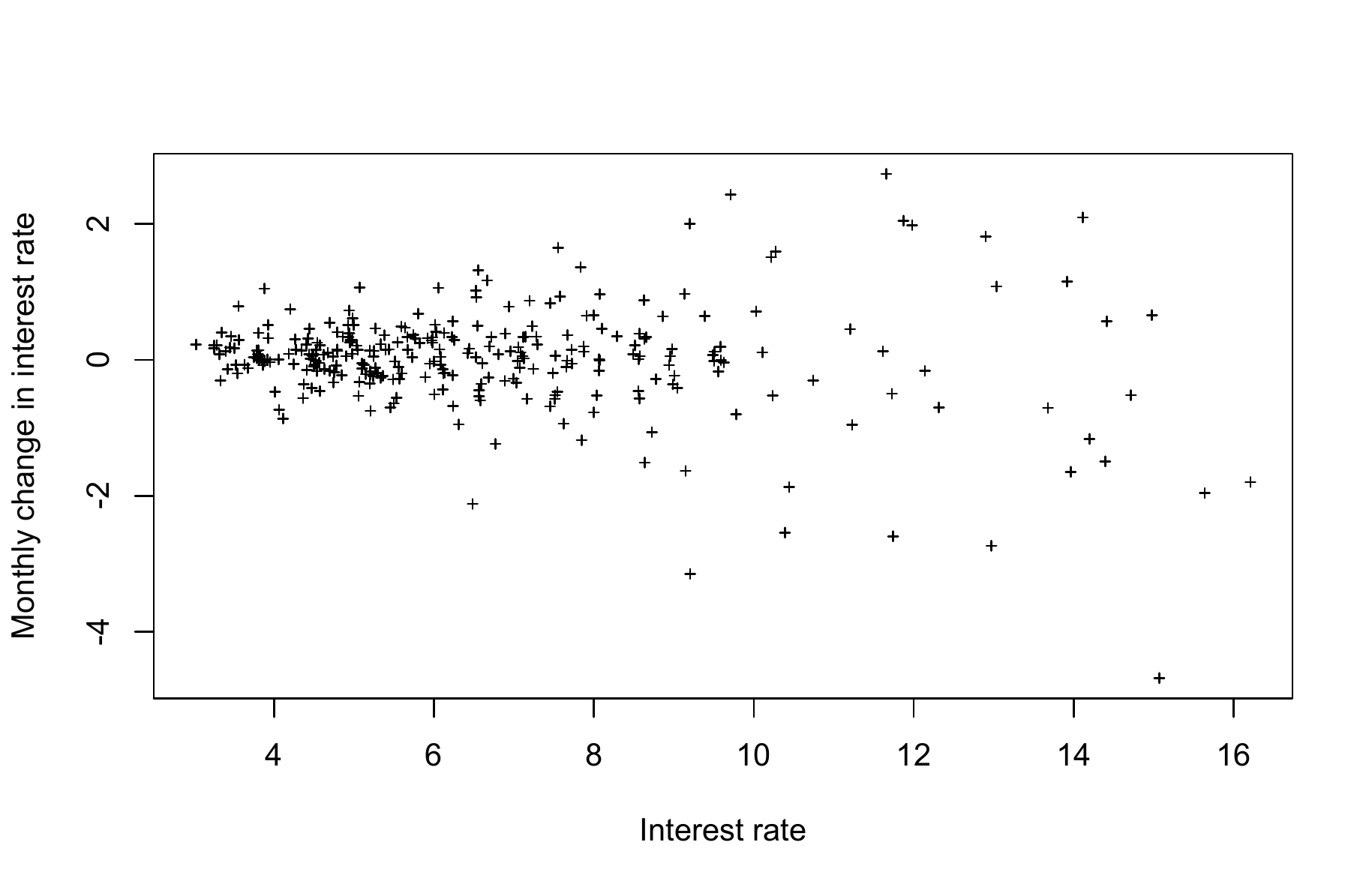}
\caption{Monthly changes in interest rate against interest rate on preceding month. }
\label{change}
\end{figure}

The aim here is to verify if the interest rate data described above fits a selected diffusion model. The following   models will be considered as candidates
\begin{itemize}
\item[\texttt{(D1)}] \texttt{Vasicek}: $dr_t = (\alpha + \beta r_t) dt + \sigma dW_t $
\item[\texttt{(D2)}] \texttt{Hyperbolic}: $dr_t = \alpha\frac{r_t}{\sqrt{1+r_t^2}} dt + \sigma dW_t$
\item[\texttt{(D3)}] \texttt{A\"it-Sahalia 1}: $dr_t = (\alpha_0 + \alpha_1 r_t + \alpha_2 r_t^{-1} + \alpha_3 r_t^2) dt + \sigma dW_t$
\item[\texttt{(D4)}] \texttt{CIR}: $dr_t = (\alpha+\beta r_t)dt + \sigma\sqrt{r_t}dW_t$
\item[\texttt{(D5)}] \texttt{CKLS 1}: $dr_t = \kappa(\alpha - r_t) dt + \sigma r_t^{0.8}dW_t$
\item[\texttt{(D6)}] \texttt{CKLS 2}: $dr_t = \kappa(\alpha - r_t) dt + \sigma r_t^{1.5}dW_t$
\item[\texttt{(D7)}] \texttt{A\"it-Sahalia 2}: $dr_t = (\alpha_0 + \alpha_1 r_t + \alpha_2 r_t^{-1} + \alpha_3 r_t^2) dt + \sigma r_t^{1.5}dW_t$
\end{itemize}

The procedure consists in fitting  the interest rate data to each of the  models outlined above and then  apply the statistics described  of Section \ref{sec4} to see how good is the model fit.
Table \ref{tabIR} reports the p-values resulting from applying the statistics  discussed in Section \ref{sec4} after fitting each of the models D1--D7 to the monthly interest rate data.  As expected,  all models with  constant conditional volatility (D1-D3) are clearly rejected even when we considered different drift functions.
\begin{table}[hbt]
 \caption{\label{tabIR}{P-values of the test statistics applied to different models (D1--D7) fitted to the monthly interest rate data.}}
\begin{center}
{\scriptsize
\begin{tabular}{l|c|ccccccc}
\hline
&Test&	D1	&	D2	&	D3	&	D4	&	D5	&	D6	&	D7	\\ \cline{2-9}
&$\tilde S_{n}^1$&	0.1432	&	0.1108	&	0.3589	&	0.1548	&	0.1574	&	0.1779	&	0.5179	\\
&$\tilde S_{n}^2$&	0.0001	&	0.0001	&	0.0000	&	0.0018	&	0.0096	&	0.7873	&	0.7937	\\
Transformation technique&$\tilde S_{n}^\star$&	0.0001	&	0.0002	&	0.0002	&	0.0028	&	0.0128	&	0.3911	&	0.7779	\\
&$\tilde S_{n}^\circ$ &	0.0001	&	0.0002	&	0.0001	&	0.0036	&	0.0191	&	0.3241	&	0.7676	\\
&$\tilde S_{n}^\bullet$&	0.0001	&	0.0002	&	0.0002	&	0.0026	&	0.0113	&	0.4153	&	0.7765	\\ \hline
&$ S_{n}^1$&	0.2840	&	0.3080	&	0.5130	&	0.5370	&	0.6410	&	0.8170	&	0.9850	\\
&$ S_{n}^2$&	0.0000	&	0.0000	&	0.0000	&	0.0000	&	0.0020	&	0.6080	&	0.5920	\\
Multipliers procedure&$ S_{n}^\star$&	0.0000	&	0.0000	&	0.0000	&	0.0000	&	0.0020	&	0.7800	&	0.8350	\\
&$S_{n}^\circ$&	0.0000	&	0.0000	&	0.0000	&	0.0000	&	0.0020	&	0.7060	&	0.6680	\\
&$ S_{n}^\bullet$&	0.0000	&	0.0000	&	0.0000	&	0.0000	&	0.0098	&	0.8443	&	0.8976	\\ \hline
&$ S_{n}^1$&	0.2566	&	0.2893	&	0.5424	&	0.5302	&	0.6522	&	0.8286	&	0.9868	\\
&$ S_{n}^2$&	0.0000	&	0.0001	&	0.0000	&	0.0001	&	0.0010	&	0.6077	&	0.6126	\\
Numerical approximation&$ S_{n}^\star$&	0.0000	&	0.0002	&	0.0000	&	0.0002	&	0.0019	&	0.7812	&	0.8343	\\
&$S_{n}^\circ$&	0.0000	&	0.0001	&	0.0000	&	0.0001	&	0.0010	&	0.7169	&	0.6908	\\
&$ S_{n}^\bullet$ &	0.0001	&	0.0003	&	0.0001	&	0.0008	&	0.0056	&	0.8490	&	0.9088	\\ \hline
  \end{tabular}
  }
  \end{center}
  \end{table}

Models D4 and D5 are also rejected, implying that the rate in the diffusion function of the monthly interest rate is not  $0.5$ or $0.8$.
On the other hand, all  tests  cannot reject the D6 and D7 implying  that the diffusion of the interest rate is proportional to $X_t^{1.5}$.
This conclusion is in perfect concordance with existing findings in the literature (see for instance \cite{ait3}). In fact, the rate of  $1.5$  was the recommended choice in \cite{ait}, among others, in a study of the same interest rate data.

\section{Conclusion}\label{conc}
Though Khmaldze matringale transform is quite popular as it simplifies the asymptotic behavior, its finite sample behavior is in general far from being the best compared to re-sampling or numerical approximation of the original statistics. We also notice that even the level of tests based on Khmaladze transform was sensitive to the estimation of the function $g$, defined in \eqref{Kh1}.  In our case, the estimation function $g$ used a non-parametric  estimate of the derivative of the function $\Gamma_\theta$ and we noticed that the level was sensitive to the bandwidth parameter utilized in the estimation. The level of the tests obtained by numerical approximation or re-sampling of the original statistics are stable and close to their target values. Tests, using the original cumulative residual process, are in general more powerful than the ones based on martingale transformation. Moreover, as mentioned earlier, p-values of the combined statistics based on the transformed process are only computable when the components $\mathbb{D}^1$ and $\mathbb{D}^2$ are independent. Whereas p-values for the multipliers procedure for both $S_n^\star$ and $S_n^\circ$ are obtained in the same manner whether the components  $\mathbb{D}^1$ and $\mathbb{D}^2$ are independent or not. For the approximation technique, the algorithm described in the manuscript provides the p-values of $S_n^\star$  whether the components  $\mathbb{D}^1$ and $\mathbb{D}^2$ are independent or not. For $S_n^\circ$  we used the independence of $\mathbb{D}^1$ and $\mathbb{D}^2$ to simplify our computation, but the process can be generalized to the dependent case by approximating the distribution of the maximum of two dependent quadratic forms of normal random variables.
The computations involved in the approximation techniques  are extremely fast. As mentioned in the paper, the computation  involved in the multipliers bootstrap  are  quite fast since the parameters are estimated only once and that each bootstrap iteration just requires generating $n$ i.i.d normal mean zero and variance one random variables. The combined statistics $S_n^\star$, $S_n^\circ$ and $S_n^\bullet$ have similar power behavior in general, but $S_n^\star$ is much easier to compute and its p-values are easier to obtain.  We, therefore, recommend practitioners to use tests based on the original process more often with a multipliers procedure or a numerical approximation technique. Among these statistics $S_n^\star$ would be the easiest to implement. The tests introduced in this manuscript are in general a bit more powerful than those in \cite{escanciano2010}, but there are alternatives, like A1, where tests in \cite{escanciano2010} clearly outperform those discussed here.

The procedures discussed here  generalize to the case of multivariate time series where $X_i\in \mathbb{R}^d$. The procedure based on the multipliers bootstrap would be the easiest to generalize. In fact, it will not require any modification it suffices to adjust the definition of the process and the functions $\phi^\star$ and $\Gamma_\theta$ to the multivariate case. The generalization of the numerical approximation and that of the martingale transform would need more technical work.

\appendix
\section{Proofs}
We start by proving the next Lemma which is used repeatedly in the proofs. It establishes a  uniform law of large number result needed to show that the convergence is uniform in $x$ for both theorems (\ref{Tn}) and (\ref{theoremMB}).
\begin{lemma}\label{lem_ULLN}
If $\chi_i$, $i=1\ldots,n$ is strictly stationary and ergodic series satisfying $\mathbb{E}|\chi_1|<\infty$ and if $X_i$ is strictly stationary and ergodic series then, we have
\begin{enumerate}
\item [i)] $\sup_{x\in \mathbb{R}}\left|\frac{1}{n}\sum_{i=1}^n\chi_i\1\{X_{i-1}\le x\}-\mathbb{E}[\chi_1\1\{X_{0}\le x\}]\right|
$ converges to zero almost surely,
\item [ii)] If $g_k$ and $A_k$ satisfy the conditions of Theorem \ref{Tn} then 
\begin{multline*}\sup_{x\le x_0}\biggl\|\frac{1}{n}\sum_{i=1}^n\chi_i \int_{-\infty}^{\min(x, X_{i-1})} (g_k(y))^\top A_k^{-1} (y)
K_{k}(dy) \\ -\mathbb{E}\biggl[ \chi_1\int_{-\infty}^{\min(x, X_{0})}  (g_k (y))^\top A_k^{-1} (y)
K_{k}(dy)\biggr]\biggr\|
\end{multline*} converges to zero almost surely for $k=1,2$ and for every $x_0$ satisfying $A_k(x_0)$ invertible.
\end{enumerate}
\end{lemma}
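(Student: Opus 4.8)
The plan is to establish part (i) first by a Glivenko--Cantelli type argument and then to deduce part (ii) by reducing it to part (i). Write $H_n(x)=\frac{1}{n}\sum_{i=1}^n\chi_i\1\{X_{i-1}\le x\}$ and $H(x)=\mathbb{E}[\chi_1\1\{X_0\le x\}]$. Since $\chi_i$ need not be nonnegative, $H_n$ is not monotone in $x$, so I would first split $\chi_i=\chi_i^+-\chi_i^-$ into positive and negative parts and treat each separately. Each of $\chi_i^{\pm}$ is again strictly stationary and ergodic with finite first moment (bounded by $\mathbb{E}|\chi_1|$), and the associated partial sums $H_n^{\pm}(x)=\frac{1}{n}\sum_{i=1}^n\chi_i^{\pm}\1\{X_{i-1}\le x\}$ are nondecreasing in $x$.

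For each fixed $x$ the pointwise ergodic theorem gives $H_n^{\pm}(x)\to H^{\pm}(x):=\mathbb{E}[\chi_1^{\pm}\1\{X_0\le x\}]$ almost surely. To upgrade this to uniform convergence I would exploit that, because $F$ is continuous, $H^{\pm}$ is a bounded, nondecreasing, continuous function on $\overline{\mathbb{R}}$, hence uniformly continuous. Fixing $\epsilon>0$, choose a finite grid $-\infty=x_0<x_1<\cdots<x_m=+\infty$ with $H^{\pm}(x_{j+1})-H^{\pm}(x_j)<\epsilon$. For $x\in[x_j,x_{j+1}]$, monotonicity of $H_n^{\pm}$ and $H^{\pm}$ sandwiches $H_n^{\pm}(x)-H^{\pm}(x)$ between $H_n^{\pm}(x_j)-H^{\pm}(x_{j+1})$ and $H_n^{\pm}(x_{j+1})-H^{\pm}(x_j)$, whence $\sup_x|H_n^{\pm}(x)-H^{\pm}(x)|\le\max_{0\le j\le m}|H_n^{\pm}(x_j)-H^{\pm}(x_j)|+\epsilon$. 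The maximum over the finite grid tends to zero almost surely by the ergodic theorem; letting $\epsilon\downarrow 0$ along a countable sequence gives the claim for $H_n^{\pm}$, and subtracting the two yields part (i).

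For part (ii), set $\psi_k(t)=\int_{-\infty}^t (g_k(y))^\top A_k^{-1}(y)\,K_k(dy)$, so that the integrand equals $\psi_k(\min(x,X_{i-1}))$. The crucial point is that $\psi_k$ is bounded on $(-\infty,x_0]$: since $A_k(y)-A_k(x_0)$ is nonnegative definite for $y\le x_0$ one has $\|A_k^{-1}(y)\|\le\|A_k^{-1}(x_0)\|$ there, and the remaining integrability of $\|g_k\|$ against $K_k$ is furnished by Assumption (K2). Using $\min(x,X_{i-1})=X_{i-1}$ on $\{X_{i-1}\le x\}$ and $=x$ on $\{X_{i-1}>x\}$, I would write
$$\frac{1}{n}\sum_{i=1}^n\chi_i\psi_k(\min(x,X_{i-1}))=\frac{1}{n}\sum_{i=1}^n\chi_i\psi_k(X_{i-1})\1\{X_{i-1}\le x\}+\psi_k(x)\,\frac{1}{n}\sum_{i=1}^n\chi_i\1\{X_{i-1}>x\}.$$
Applying part (i) componentwise to the first sum with the stationary ergodic, integrable marks $\chi_i\psi_{k,j}(\min(X_{i-1},x_0))$, and applying part (i) together with the ergodic theorem for $\frac{1}{n}\sum_i\chi_i$ to the difference $\frac{1}{n}\sum_i\chi_i\1\{X_{i-1}>x\}$ inside the second term, gives uniform convergence of each piece; since $\psi_k$ is bounded on $(-\infty,x_0]$, multiplying the uniformly convergent factor by $\psi_k(x)$ preserves uniformity. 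Collecting terms identifies the limit as $\mathbb{E}[\chi_1\psi_k(\min(x,X_0))]$, which is exactly the asserted limit.

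The main obstacle is the uniformity in part (i): the ergodic theorem only delivers pointwise almost-sure convergence, and promoting it to a supremum over $x$ is precisely what requires the monotonicity afforded by the sign decomposition together with the continuity of $F$. Once part (i) is in hand, part (ii) is essentially bookkeeping, the only genuine input being the uniform bound on $A_k^{-1}$ over $(-\infty,x_0]$ that keeps $\psi_k$ bounded.
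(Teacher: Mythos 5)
Your proof is correct, but it takes a genuinely different route from the paper's, most visibly in part (ii). The paper proves (ii) directly by a bracketing (Glivenko--Cantelli) argument: it introduces the monotone envelope $\eta^\star(x)=\mathbb{E}\bigl[|\chi_1|\int_{-\infty}^{\min(x,X_0)}\|(g_k(y))^\top A_k^{-1}(y)\|K_k(dy)\bigr]$, partitions $(-\infty,x_0]$ into finitely many intervals on which $\eta^\star$ increases by at most $\epsilon$, and controls the oscillation of the empirical quantity between grid points by the ergodic LLN applied to $|\chi_i|$-marked increments; part (i) is only sketched there (declared analogous, with a pointer to Lemma 4.1 of Koul and Stute). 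You instead prove (i) in full --- via the decomposition $\chi_i=\chi_i^+-\chi_i^-$, monotonicity of $H_n^{\pm}$, and uniform continuity of $H^{\pm}$ (which correctly invokes the assumed continuity of $F$) --- and then reduce (ii) to (i) through the exact identity $\psi_k(\min(x,X_{i-1}))=\psi_k(X_{i-1})\1\{X_{i-1}\le x\}+\psi_k(x)\1\{X_{i-1}>x\}$, with integrability of the new marks secured by the bound $\|A_k^{-1}(y)\|\le\|A_k^{-1}(x_0)\|$ for $y\le x_0$, the same bound the paper uses. Your reduction makes (ii) pure bookkeeping and avoids running a second bracketing argument; the truncation device of using marks $\chi_i\psi_k(\min(X_{i-1},x_0))$ is a nice touch, since it keeps the marks integrable even though $\psi_k$ may be unbounded beyond $x_0$. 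The paper's envelope argument, for its part, handles signed marks in one stroke without the $\pm$ splitting, and its grid construction leans on continuity of $K_k$ (Assumption A1) rather than continuity of $F$. One shared caveat, which is not a gap relative to the paper: both arguments tacitly require the pair $(\chi_i,X_{i-1})$ to be jointly stationary and ergodic (as it is in every application in the paper, where $\chi_i$ is a measurable function of $(X_i,I_{i-1})$), which is slightly stronger than the lemma's literal hypothesis that each sequence is separately stationary and ergodic.
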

\begin{proof}
The proof of i)   and ii)  are quite similar. Here only the proof of ii) is given. For i) one repeats the same steps. One also can see that i) is similar to 4.1 in \cite{KS}.  Though ii) can be deduced from of ULLN of \cite{And} if the set of $x$'s is compact. Since this is not the case here, a direct proof using a Glivenko-Cantelli type argument shall be given next. First recall that $A_k(x)$ is invertible for all $x\le x_0$ whenever $A_k(x_0)$ is invertible and that one can easily verify that $\|A_k^{-1}(x)\|\le \|A_k^{-1}(x_0)\|<\infty$. One also sees that
\begin{eqnarray*}
\left\| \int_{-\infty}^{\min(x, X_{0})}  (g_k(y))^\top A_k^{-1} (y)
K_{k}(dy)\right\| \le \left[ \int_{-\infty}^{x_0} \|(g_k(y))^\top \|
K_{k}(dy)\right]\| A^{-1} (x_0)\|<\infty,\end{eqnarray*}
since $g_k$ satisfies the condition of Theorem \ref{Tn}.
 The LLN for stationary ergodic sequence yields the almost sure convergence of ii) for every fixed $x\le x_0$. To prove that the convergence is uniform in $x\le x_0$ one uses a Glivenko-Cantelli type argument applied using $\eta^\star(x)=\mathbb{E}\left[ |\chi_1|\int_{-\infty}^{\min(x, X_{0})}  \|(g_k(y))^\top A_k^{-1} (y)\|K_{k}(dy)\right]$.  Note that $\eta^\star$ is a continuous increasing function and for all $x\le x_0$ one has $\eta^\star(x)\le \eta^\star(x_0)<\infty$. Therefore for every $\epsilon >0$ there exist a finite partition $-\infty=t_0<t_1<\ldots t_k=x_0$ such that $0\le \eta^\star(t_{j+1})-\eta^\star(t_j)\le \epsilon$. To ease presentation, let
$$\eta_n(x)=\frac{1}{n}\sum_{i=1}^n\chi_i\int_{-\infty}^{\min(x, X_{i-1})} (g_k(y))^\top A_k^{-1} (y)K_{k}(dy)$$ and $$\eta(x)=\mathbb{E}\left[ \chi_1\int_{-\infty}^{\min(x, X_{0})}  (g_k(y))^\top A_k^{-1} (y)K_{k}(dy)\right].$$
 Note that for any $x\le x_0$ there exists $j< k$ such that $t_j< x\le t_{j+1}$. Clearly, $\|\eta_n(x)-\eta(x)\|\le \|\eta_n(x)-\eta_n(t_j)\|+\|\eta_n(t_j)-\eta(t_j)\|+\|\eta(t_j)-\eta(x)\|.$  Using its definition one sees that
 \begin{eqnarray*}\|\eta_n(x)-\eta_n(t_j)\|&\le& \biggl| \frac{1}{n}\sum_{i=1}^n|\chi_i|\int_{\min(t_{j}, X_{i-1})}^{\min(t_{j+1}, X_{i-1})} \|(g_k(y))^\top A_k^{-1} (y)\|K_{k}(dy)\\
 & & \quad -\eta^\star(t_{j+1})+\eta^\star(t_{j})\biggr| 
 +|\eta^\star(t_{j+1})-\eta^\star(t_{j})|.
 \end{eqnarray*}
  One also easily verifies that
$\|\eta(t_j)-\eta(x)\|\le |\eta^\star(t_{j+1})-\eta^\star(t_{j})|$. Therefore
\begin{multline*}
\sup_{x\le x_0}\left|\eta_n(x)-\eta(x)\right| \le \|\eta_n(t_j)-\eta(t_j)\|+2\epsilon +\max_{j=0,\ldots,k} \biggl| \frac{1}{n}\sum_{i=1}^n|\chi_i|\\
\times\int_{\min(t_{j}, X_{i-1})}^{\min(t_{j+1}, X_{i-1})} \|(g_k(y))^\top A_k^{-1} (y)\| K_{k}(dy)-\eta^\star(t_{j+1})+\eta^\star(t_{j})\biggr|.
\end{multline*}
 The pointwise LLN and the fact that $k$ is finite implies that the first and last terms in the above inequality converge almost surely to zero. Since $\epsilon$ was arbitrary, one concludes that $\sup_{x\le x_0}\left|\eta_n(x)-\eta(x)\right|$ converges to zero almost surely.
\end{proof}
\subsection{Proof of Theorem \ref{TheoremA}}
The weak convergence of $\mathbb{D}_n$ follows from Theorem 1 in \cite{escanciano2007b}.
For $\widehat{\mathbb{D}}_n$, direct manipulations show that
\begin{eqnarray*}
\widehat{\mathbb{D}}_n(x)= \mathbb{D}_n(x)-\Gamma^\top_{\theta_0}(x)\sqrt{n}(\theta_n-\theta_0)+B_{n,1}(x)+B_{n,2}(x)
\end{eqnarray*}
where $$B_{n,1}(x)=\left[\frac{1}{n}\sum_{i=1}^n\dot{W}^\top_{\theta_0}(X_i,I_{i-1})\1\{X_{i-1}\le x\}- \Gamma^\top_{\theta_0}(x)\right]\sqrt{n}(\theta_n-\theta_0),$$ and
 $$B_{n,2}(x)=\frac{1}{\sqrt{n}}\sum_{i=1}^n\left[W_{\theta_n}(X_i,I_{i-1})-W_{\theta_0}(X_i,I_{i-1})- \dot{W}^\top_{\theta_0}(X_i,I_{i-1})(\theta_n-\theta_0)\right]\1\{X_{i-1}\le x\}.$$
Note that
$\|B_{n,1}\|$ converges in probability to zero by Lemma \ref{lem_ULLN}, the definition of  $\Gamma_{\theta_0}$ and the fact that $\sqrt{n}(\theta_n-\theta_0)=O_\mathbb{P}(1)$. By  Assumption (A3), the term   $\|B_{n,2}\|$ is bounded by $\sqrt{n}\|\theta_n-\theta_0\|\lambda_1(\|\theta_n-\theta_0\|)\frac{1}{n}\sum_{i=1}^nM_1(X_i,I_{i-1})$ which converges to zero in probability by the LLN and the fact that $\sqrt{n}\|\theta_n-\theta_0\|=O_\mathbb{P}(1)$.   Therefore, $\widehat{\mathbb{D}}_n(x)$ is asymptotically equivalent to ${\mathbb{D}}_n(x)-\Gamma^\top_{\theta_0}(x)\sqrt{n}(\theta_n-\theta_0)$.  Calling on  Assumption (A2),  one verifies that $\sqrt{n}(\theta_n-\theta_0)$ is tight and converges to $\Theta$ and that $(\mathbb{D}_n,\sqrt{n}(\theta_n-\theta_0))$ converge jointly to $(\mathbb{D},\Theta)$. Hence $\widehat{\mathbb{D}}_n$ converges to $\widehat{\mathbb{D}}={\mathbb{D}}-\Gamma^\top_{\theta_0}(x)\Theta$. Straightforward computations show the covariance function of $\widehat{\mathbb{D}}$ is precisely given by \eqref{def9}.\qed

\subsection{Proof of Theorem \ref{localt}}
The proof follows the same approach as that of Theorem \ref{TheoremA}. Precisely, one writes
\begin{eqnarray*}
	\widehat{\mathbb{D}}_n(x)= \mathbb{D}_n(x)+B_{n,0}(x)-B_{n,1}(x)+B_{n,2}(x)
\end{eqnarray*}
where
$$B_{n,0}(x)=\frac{1}{n}\sum_{i=1}^n\left(\begin{array}{l}a_1(I_{i-1})\\
a_2(I_{i-1})\end{array}\right)\1\{X_{i-1}\le x\},$$
$$B_{n,1}(x)=\left[\frac{1}{n}\sum_{i=1}^n\dot{W}^\top_{\theta_0}(X_i,I_{i-1})\1\{X_{i-1}\le x\}\right]\sqrt{n}(\theta_n-\theta_0),$$ and
$$B_{n,2}(x)=\frac{1}{\sqrt{n}}\sum_{i=1}^n\left[W_{\theta_n}(X_i,I_{i-1})-W_{\theta_0}(X_i,I_{i-1})- \dot{W}^\top_{\theta_0}(X_i,I_{i-1})  (\theta_n-\theta_0)\right]\1\{X_{i-1}\le x\}.$$
$B_{n,0}$ converges uniformly to $\Psi_A(x)$ by Lemma \ref{lem_ULLN}. Assumption (L1) and Lemma \ref{lem_ULLN}, yield that  $B_{n,1}$ converges to $\Gamma^\top_{\theta_0}(x) (\Theta +\xi_A)$.
From Assumption (A3), one concludes that the term $B_{n,2}$ is uniformly  bounded by $\sqrt{n}\|\theta_n-\theta_0\|\lambda_1(\|\theta_n-\theta_0\|)\frac{1}{n}\sum_{i=1}^nM_1(X_i,I_{i-1})$ which goes to zero in probability since
$\mathbb{E}\{M_1(X_i,I_{i-1})\}<\infty$ and by Assumption (L3), $\sqrt{n}\|\theta_n-\theta_0\|=O_{\mathbb{P}}(1)$ and $\|\theta_n-\theta_0\|=o_{\mathbb{P}}(1)$.\qed

\subsection{Proof of Theorem \ref{Tn}}
The proof of Theorem (\ref{Tn}) is as follows. First, in Lemma \ref{Tn1} we establish, for $k=1,2,$ that $T_n^k(\widehat{\mathbb{D}}_n^k)$ is asymptotically equivalent to $T^k(\widehat{\mathbb{D}}_n)$. Second,  using the continuous mapping theorem, one concludes that  $T^k(\widehat{\mathbb{D}}_n^k)$ converges to $T^k(\widehat{\mathbb{D}}^k)$. Then the proof is completed by showing that $T^k(\widehat{\mathbb{D}}^k)$ is equal to $T^k(\mathbb{D}^k)$  which has the same law as $\mathbb{D}^k$.
\begin{lemma}\label{Tn1}
Under the assumptions of Theorem \ref{Tn},
$\sup_{x\le x_0}\|T_n^k(\widehat{\mathbb{D}}_n^k)(x)-T^k(\widehat{\mathbb{D}}_n^k)(x)\|$ converges to zero in probability for $k=1,2$.
\end{lemma}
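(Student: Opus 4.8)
The plan is to prove Lemma \ref{Tn1} by showing that replacing $(g_k,A_k,K_k)$ by the estimates $(\hat g_k,\hat A_k,\hat K_k)$ in the kernel of the transformation perturbs $T^k(\widehat{\mathbb{D}}_n^k)$ only negligibly, uniformly on $(-\infty,x_0]$. Using the nested-integral form \eqref{trsns.Dn.1}, the common untransformed term $\widehat{\mathbb{D}}_n^k(x)$ cancels in the difference $J_n^k(x):=T_n^k(\widehat{\mathbb{D}}_n^k)(x)-T^k(\widehat{\mathbb{D}}_n^k)(x)$, so that writing $h_k:=(g_k)^\top A_k^{-1}$, $\hat h_k:=(\hat g_k)^\top \hat A_k^{-1}$, $U_n(y):=\int_y^\infty g_k(z)\widehat{\mathbb{D}}_n^k(dz)$ and $\hat U_n(y):=\int_y^\infty \hat g_k(z)\widehat{\mathbb{D}}_n^k(dz)$, one has
$$J_n^k(x)=\int_{-\infty}^x h_k(y)U_n(y)K_k(dy)-\int_{-\infty}^x \hat h_k(y)\hat U_n(y)\hat K_k(dy).$$
First I would telescope this into three pieces isolating one substitution at a time:
$$J_n^k(x)=\int_{-\infty}^x h_k(U_n-\hat U_n)K_k(dy)+\int_{-\infty}^x (h_k-\hat h_k)\hat U_n K_k(dy)+\int_{-\infty}^x \hat h_k\hat U_n (K_k-\hat K_k)(dy),$$
call them (I), (II) and (III).

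The preliminary step is to record the uniform consistency of the auxiliary estimators on $(-\infty,x_0]$. By Assumption (K3), $\|\hat g_k-g_k\|\to 0$ in probability; by Lemma \ref{lem_ULLN} (applied with $\chi_i=W^k_{\theta_0}(X_i,I_{i-1})^2$, the passage from $\theta_0$ to $\theta_n$ being absorbed through Assumption (A3)) one gets $\sup_{x\le x_0}\|\hat K_k(x)-K_k(x)\|\to 0$, and combining these with the identity $A_k^{-1}-\hat A_k^{-1}=A_k^{-1}(\hat A_k-A_k)\hat A_k^{-1}$ together with the bound $\|A_k^{-1}(x)\|\le\|A_k^{-1}(x_0)\|$ noted before Lemma \ref{lem_ULLN} yields $\sup_{x\le x_0}\|\hat A_k^{-1}(x)-A_k^{-1}(x)\|\to 0$. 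Pieces (II) and (III) are then routine: in (II) the factor $\int_{-\infty}^{x_0}\|h_k-\hat h_k\|\,K_k(dy)\to 0$ in probability (by the consistency of $\hat g_k$, $\hat A_k^{-1}$ and the integrability in (K2)), while $\sup_y\|\hat U_n(y)\|=O_{\mathbb{P}}(1)$ because $U_n$ is a difference of values of the weighted marked empirical process $x\mapsto n^{-1/2}\sum_i W^k_{\theta_n}(X_i,I_{i-1})g_k(X_{i-1})\1\{X_{i-1}\le x\}$, which is tight by the Lindeberg-type conditions of Assumption (K4). Piece (III) is handled by an integration-by-parts (Helly) argument: $\hat h_k\hat U_n$ is bounded and of bounded variation on $(-\infty,x_0]$ (using (K2) and the monotone structure), so integrating against the uniformly vanishing signed measure $(K_k-\hat K_k)(dy)$ gives $o_{\mathbb{P}}(1)$.

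The main obstacle is piece (I). Bounding it by $\big(\int_{-\infty}^{x_0}\|h_k\|K_k(dy)\big)\sup_y\|U_n(y)-\hat U_n(y)\|$, and noting the first factor is finite by (K2), it suffices to show
$$\sup_y\Big\|\int_y^\infty (g_k(z)-\hat g_k(z))\,\widehat{\mathbb{D}}_n^k(dz)\Big\|\longrightarrow 0 \quad\text{in probability}.$$
The delicate point is that this cannot be controlled by $\|\hat g_k-g_k\|$ alone, since the $\sqrt{n}$-normalization in $\widehat{\mathbb{D}}_n^k$ would otherwise produce an $O_{\mathbb{P}}(\sqrt n)\,o_{\mathbb{P}}(1)$ term; this is exactly why the second condition of Assumption (K3) is imposed, but it is stated for $\mathbb{D}_n$, so the plan is to transfer it to $\widehat{\mathbb{D}}_n$. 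Writing $W^k_{\theta_n}=W^k_{\theta_0}+(W^k_{\theta_n}-W^k_{\theta_0})$, the $W^k_{\theta_0}$ part reproduces $\int_y^\infty(g_k-\hat g_k)\mathbb{D}_n^k(dz)$, which is $o_{\mathbb{P}}(1)$ uniformly in $y$ by (K3). For the remainder, Assumption (A3) gives $W^k_{\theta_n}-W^k_{\theta_0}=(\theta_n-\theta_0)^\top\dot W^k_{\theta_0}+r_i$ with $\|r_i\|\le\|\theta_n-\theta_0\|M_1\lambda_1(\|\theta_n-\theta_0\|)$, so the corresponding contribution is bounded uniformly in $y$ by
$$\sqrt n\,\|\theta_n-\theta_0\|\;\|\hat g_k-g_k\|\;\frac1n\sum_{i=1}^n\|\dot W^k_{\theta_0}(X_i,I_{i-1})\|=O_{\mathbb{P}}(1)\cdot o_{\mathbb{P}}(1)\cdot O_{\mathbb{P}}(1),$$
which is $o_{\mathbb{P}}(1)$ by (A2), (K3), (A3) and the law of large numbers, and the $r_i$-part is controlled identically using $\mathbb{E}M_1\le C$. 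This establishes (I), and collecting the three pieces completes the proof.
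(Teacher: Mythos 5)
Your decomposition into (I), (II), (III) runs parallel to the paper's own proof (the paper splits the difference into two terms, keeping the \emph{true} integrand $g_k^\top A_k^{-1}\psi_n^k$ against $(K_k-\hat K_k)(dy)$ and putting the full difference of integrands against $\hat K_k(dy)$), and your treatment of pieces (I) and (II) — in particular the transfer of Assumption (K3) from $\mathbb{D}_n^k$ to $\widehat{\mathbb{D}}_n^k$ via the (A3) expansion, and the uniform consistency of $\hat A_k^{-1}$ on $(-\infty,x_0]$ — reproduces exactly the paper's estimates for $\sup_y|\psi_n^k-\widehat\psi_n^k|$ and $\sup_{x\le x_0}\|\hat A_k^{-1}-A_k^{-1}\|$. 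The genuine gap is in piece (III). Your Helly/integration-by-parts step needs $\hat h_k\hat U_n$ to be of bounded variation \emph{uniformly in $n$}, and this is false: $\hat U_n(y)=n^{-1/2}\sum_{i=1}^n W^k_{\theta_n}(X_i,I_{i-1})\hat g_k(X_{i-1})\1\{X_{i-1}>y\}$ is a pure-jump process whose total variation equals $n^{-1/2}\sum_{i=1}^n |W^k_{\theta_n}(X_i,I_{i-1})|\,\|\hat g_k(X_{i-1})\|$, which by the law of large numbers is of order $\sqrt{n}$. Integration by parts therefore only yields a bound of order $\sup_x\|\hat K_k(x)-K_k(x)\|\cdot O_{\mathbb{P}}(\sqrt{n})$, and under the paper's stationary-ergodic setting the uniform LLN (Lemma \ref{lem_ULLN}) provides no rate for $\hat K_k-K_k$, so this product is $o_{\mathbb{P}}(1)\cdot O_{\mathbb{P}}(\sqrt{n})$ — inconclusive. (A secondary flaw: even "bounded" is unjustified, since (K2) imposes only integrability of $g_k$ with respect to $K_k$, not sup-norm boundedness, so $\hat h_k$ need not be uniformly bounded on $(-\infty,x_0]$.)

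The correct mechanism — and what the paper uses — is the \emph{tightness} (asymptotic equicontinuity) of the integrated process rather than its variation: since $U_n$ (the paper's $\psi_n^k$) is tight, which you have already established for piece (II), one approximates it uniformly, up to arbitrarily small error, by a process that is constant on the cells of a fixed finite partition; on each cell the integral of the deterministic, $K_k$-integrable weight $g_k^\top A_k^{-1}$ against $(K_k-\hat K_k)(dy)$ vanishes in probability by the Glivenko--Cantelli-type argument of Lemma \ref{lem_ULLN}, and finiteness of the partition makes the whole term $o_{\mathbb{P}}(1)$; this is precisely Lemma 4.1 of \cite{KS}, which the paper invokes at this point. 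Note this also explains why the paper's split places the true $g_k^\top A_k^{-1}\psi_n^k$, rather than your estimated $\hat h_k\hat U_n$, against $(K_k-\hat K_k)$: the cell-wise uniform LLN requires a fixed $K_k$-integrable weight, whereas with $\hat h_k\hat U_n$ you would first have to swap back to $h_kU_n$ using your (I)--(II) estimates anyway. With piece (III) rerun along these lines, the rest of your argument stands.
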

\begin{proof}  To prove the Lemma,  observe that
\begin{multline*}
T_n^k(\widehat{\mathbb{D}}_n^k)(x)-T^k(\widehat{\mathbb{D}}_n^k)(x)
= \int_{-\infty}^x g_k^\top(y)A_k^{-1}(y) \left( \int_y^\infty g_k(z){\widehat{\mathbb{D}}_n^k(dz)}\right)K_k(dy)\\
 -\int_{-\infty}^x \hat{g}_k^\top(y) \hat{A}_{k}^{-1}(y) \left( \int_y^\infty \hat{g}_k(z){\widehat{\mathbb{D}}_n^k(dz)}     \right)\hat{K}_{k}(dy).
\end{multline*}
First, one establishes that $\psi_n^k$ is tight and that $\sup_y|\psi_n^k(y)-\widehat\psi_n^k(y)|$ converges to zero in probability, where
$\psi_n^k(y)=\int_y^\infty g_k(z){\widehat{\mathbb{D}}_n^k(dz)}$ and $\widehat\psi_n^k(y)=\int_y^\infty \hat{g}_k(z){\widehat{\mathbb{D}}_n^k(dz)}$. For the tightness of $\psi_n^k$, set $\psi_n^{0,k}(y)= \int_y^\infty g_k(z){\mathbb{D}}_n^k(dz)=\frac{1}{\sqrt{n}}\sum_{i=1}^n W^k_{\theta_0}(X_i,I_{i-1})g_k(X_{i-1})\1\{X_{i-1}>y\}$. Note that $\psi_n^{0,k}$ is a marked empirical process its tightness follows, from \cite{escanciano2007b} and Assumption (K4).  Next, using the same decomposition as in the proof of Theorem \ref{TheoremA}, one sees that $$\sup_{x\in \mathbb{R}}\left|\psi_n^k(x)-\psi_n^{0,k}(x)+\frac{1}{n}\sum_{i=1}^n g_k(X_{i-1})\dot{W}_{\theta_0}^k(X_i,I_{i-1})^\top\1\{X_{i-1}>x\}\sqrt{n}(\theta_n-\theta)\right|=o_\mathbb{P}(1).$$ The above with tightness of  $\psi_n^{0,k}$ and assumptions (A3-A4, K4) imply  the tightness of $\psi_n^k$. For $\sup_y|\psi_n^k(y)-\widehat\psi_n^k(y)|$, the decomposition in the proof of Theorem \ref{TheoremA}, shows that
\begin{eqnarray*}
\lefteqn{\sup_y|\psi_n^k(y)-\widehat\psi_n^k(y)| \le  \left\|\int_y^\infty (\hat{g}_k(y)-g_k(y))\mathbb{D}_n^k(dy)\right\|} \\
& & + \left\|\frac{1}{n}\sum_{i=1}^n(\hat{g}_k(X_{i-1})-{g}_k(X_{i-1}))\dot{W}_{\theta_0}^k(X_i,I_{i-1})^\top\1\{X_{i-1}>x\}\sqrt{n}(\theta_n-\theta)\right\|
+o_\mathbb{P}(1),\\
&\le& \left\|\int_y^\infty (\hat{g}_k(y)-g_k(y))\mathbb{D}_n^k(dy)\right\| +\|\sqrt{n}(\theta_n-\theta)\| \|\hat{g}_k-g_k\|\frac{1}{n}\sum_{i=1}^n\|\dot{W}_{\theta_0}^k(X_i,I_{i-1})\|+o_\mathbb{P}(1)\\
\end{eqnarray*}
which converges to zero by assumptions and the LLN.

Next, it will be shown that $\sup_{x} \|\hat{A}_k(x)-A_k(x)\|$ converges to zero in probability. Observe that
\begin{multline*}
\|\hat{A}_k(x)-A_k(x)\| = \left\|\int_x^\infty g_k(t)g_k^\top(t) K_k(dt)-\int_x^\infty \hat{g}_k(t)\hat{g}_k^\top (t) \hat{K}_k(dt)\right\|\\
\le \left\|\int_{x}^\infty g_k(t)g_k^\top(t) (K_k(dt)-\hat{K}_k(dt))\right\|+\left\|\int_x^\infty \left[g_k(t)g_k^\top(t)- \hat{g}_k(t)\hat{g}_k^\top (t)\right] \hat{K}_k(dt)\right\|.
\end{multline*}
The first term above is equal $$\|\frac{1}{n}\sum_{i=1}^n g_k(X_{i-1})g_k^\top(X_{i-1})W^k_{\theta_n}(X_i,I_{i-1})\1\{X_{i-1}>x\} -\int_x^\infty  g_k(t)g_k^\top(t)K_k(dt)\|$$ which converges to almost surely zero by Lemma (\ref{lem_ULLN}) and Assumption (K2).
The second term is bounded by $(\|\hat{g}_k-g_k\|+\|g_k\|) \|\hat{g}_k-g_k\|\int_{-\infty}^\infty  \hat{K}_k(dt)$ which goes to zero in probability by assumptions. \\
As pointed earlier, for all $x\le x_0$ one has $A_k(x) -A_k(x_0)$  non-negative definite implying that $A_k(x)$ is invertible  whenever $A_k(x_0)$ is invertible and that $\|A_k^{-1}(x)\|\le \|A_k^{-1}(x_0)\|<\infty$. Using the above result for $\hat{A}_k-A_k$ and classical algebraic manipulations one also easily see that $\sup_{x\le x_0} \|\hat{A}_k^{-1}(x)-A_k^{-1}(x)\|\stackrel{\mathbb{P}}{\longrightarrow} 0$.\\
To complete the proof of the  Lemma \ref{Tn1}  note that
\begin{eqnarray*}
T_n^k(\widehat{\mathbb{D}}_n^k)(x)-T^k(\widehat{\mathbb{D}}_n^k)(x)= \int_{-\infty}^x g_k^\top(y)A_k^{-1}(y)\psi_n^k(y)K_k(dy)
-\int_{-\infty}^x \hat{g}_k^\top(y)\hat{A}_k^{-1}(y)\widehat{\psi}_n^k(y)\hat{K}_k(dy).
\end{eqnarray*}
Direct computations show that\\
\begin{multline*}
T_n^k(\widehat{\mathbb{D}}_n^k)(x)-T^k(\widehat{\mathbb{D}}_n^k)(x)=\int_{-\infty}^x  g_k^\top(y) A_k^{-1}(y)\psi_n(y)(K_k(dy)-\hat{K}_k(dy)) \\
+\int_{-\infty}^x \left[g_k^\top(y)A_k^{-1}(y)\psi_n^k(y)-\hat{g}_k^\top(y)\hat{A}_k^{-1}(y)\hat{\psi}_n^k(y)\right]\hat{K}_k(dy).
\end{multline*}
Hence
\begin{multline*}
\sup_{x\le x_0}|T_n^k(\widehat{\mathbb{D}}_n^k)(x)-T^k(\widehat{\mathbb{D}}_n^k)(x)| \le \sup_{x\le x_0}\left\|\int_{-\infty}^x  g_k^\top(y) A_k^{-1}(y)\psi_n(y)(K_k(dy)-\hat{K}_k(dy))\right\| \\
+\sup_{x\le x_0}\left\|\int_{-\infty}^x \left[g_k^\top(y)A_k^{-1}(y)\psi_n^k(y)-\hat{g}_k^\top(y)\hat{A}_k^{-1}(y)\hat{\psi}_n^k(y)\right]\hat{K}_k(dy)\right\|.
\end{multline*}
Using Lemma (4.1) in \cite{KS} and the fact that $\psi_n$ is tight, one concludes that the first term in the above goes to zero in probability.
Term 2  is bounded by $\sum_{\ell=1}^3\check{C}_{\ell,n}$ where
$$\check{C}_{1,n}=\|\widehat{\psi}_n^k-\psi_n^k\|\sup_{x\le x_0} \|A_k^{-1}(x)\|  \int_{-\infty}^\infty \|{g}_k^\top(y)\|\hat{K}_k(dy),$$
$$\check{C}_{2,n}=\sup_{x\le x_0}\|\hat{A}_k^{-1}(x)-A_k^{-1}(x)\|(\|\hat\psi_n^k-\psi_n^k\|+ \|\psi_n^k\|)\int_{-\infty}^\infty \|\hat{g}_k^\top(y)\|\hat{K}_k(dy)$$ and
$$\check{C}_{3,n}=\sup_{x\le x_0}(\|\hat{A}_k^{-1}(x)-A_k^{-1}(x)\|+\|A_k^{-1}(x)\|) (\|\hat\psi_n^k-\psi_n^k\|+ \|\psi_n^k\|) \|\hat{g}_k^\top-g_k^\top\|\int_{-\infty}^\infty \hat{K}_k(dy).$$ Since $\psi_n^k$ is tight and $\|\psi_n^k\|=O_{\mathbb{P}}(1)$, the proof is complete by noting that, using the assumptions and the above results, $\check{C}_{\ell,n}$ for $\ell=1,\ldots, 3$ converge to zero in probability.

The transformation $T^k$ is linear and continuous which implies that $T^k(\widehat{\mathbb{D}}_n^k)$ converges to $T^k(\widehat{\mathbb{D}}^k)=T^k(\mathbb{D}^k-\Theta^\top \Gamma_{\theta_0}^k )=T^k(\mathbb{D}^k)-\Theta^\top T^k(\Gamma_{\theta_0}^k)= T^k(\mathbb{D}^k)$ since direct computations show that $T^k(\Gamma_{\theta_0}^k)=0$. Straightforward computations, similar to those in \cite{stute-etal1998}, enable us to verify that $T^k(\mathbb{D}^k)$ has the same distribution as $\mathbb{D}^k=\mathbb{W}_k\circ K_k$ where  $\mathbb{W}_k$ is standard Brownian motion.
\end{proof}


\subsection{Proof of Theorem \ref{theoremMB}}
Theorem \ref{theoremMB} will be shown by establishing parts $(a)$ and $(b)$ below
\begin{itemize}
\item[$(a)$] $\check{\mathbb{D}}_{n,b}$ is asymptotically equivalent to $ \mathbb{D}_{n,b}^{\star}$
\item[$(b)$] $\mathbb{D}_{n,b}^{\star}$ converge in Law to independent copies of $\widehat{\mathbb{D}}.$
\end{itemize}
\noindent{\bf Proof of $(a)$:}\\
Straightforward computations show that
$\check{\mathbb{D}}^k_n(x) - \mathbb{D}_n^{\star k}(x) =\sum_{k=1}^5 R_{n,k}(x)$, where\\
$R_{n,1}(x)= \left(\frac{1}{n}\sum_{i=1}^n  Z_{i,b}\dot{W}^\top_{\theta_0}(X_i,I_{i-1})\1_{\{X_{i-1}\leq x \}}\right)\sqrt{n}(\theta_n-\theta_0) ,$\\
$R_{n,2}(x)=\frac{1}{\sqrt{n}}\sum_{i=1}^n Z_{i,b} \left[{W}_{\theta_n}(X_i,I_{i-1})-{W}_{\theta_0}(X_i,I_{i-1})- \dot{W}^\top_{\theta_0}(X_i,I_{i-1})(\theta_n-\theta_0)\right]\1_{\{X_{i-1}\leq x \}}, $\\
$R_{n,3}(x)=\left[\Gamma_{\theta_n}(x) - \Gamma_{\theta_0}(x) \right]^\top\left(\frac{1}{\sqrt{n}}\sum_{i=1}^n Z_{i,b} \phi^\star(X_i,X_{i-1},\theta_0)\right),$\\
$R_{n,4}(x)=\Gamma^\top_{\theta_n}(x)\frac{1}{\sqrt{n}}\sum_{i=1}^n Z_{i,b} \dot{\phi}^\star(X_i,X_{i-1},\theta_0)^\top(\theta_n-\theta_0),$ and\\
$R_{n,5}(x)=\Gamma^\top_{\theta_n}(x)\frac{1}{\sqrt{n}}\sum_{i=1}^n Z_{i,b} \biggl[\phi^\star(X_i,X_{i-1},\theta_n) -\phi^\star(X_i,X_{i-1},\theta_0)
- \dot{\phi}^\star(X_i,X_{i-1},\theta_0)^\top(\theta_n-\theta_0)\biggr].$\\
The term $R_{n,1}$ converges to zero in probability by Lemma \ref{lem_ULLN}, term $R_{n,2}$ converges in probability to zero uniformly for all $x\in \mathbb{R}$ following the same steps as for the term $B_{n,2}$ in the proof of Theorem \ref{TheoremA}. By Assumption (M1), $\sup_{x}\|\Gamma_{\theta_n}(x) - \Gamma_{\theta_0}(x)\|\le \|\theta_n - \theta_0\|C$ which goes to zero in probability. One also sees that $\sup_x\|\Gamma_{\theta_0}(x)\|\le \mathbb{E}\|\dot{W}_{\theta_0}(X_i,I_{i-1})\|<\infty$. Next, since $\frac{1}{\sqrt{n}}\sum_{i=1}^n Z_{i,b} \phi^\star(X_i,X_{i-1},\theta_0)=O_\mathbb{P}(1)$ by the multiplier central limit theorem, then the term $R_{n,3}$ converge uniformly  to zero in probability. $$\|R_{n,4}\|\le\|\theta_n-\theta_0\| \|\frac{1}{\sqrt{n}}\sum_{i=1}^n Z_{i,b} \dot{\phi}^\star(X_i,X_{i-1},\theta_0)^\top\| (\|\Gamma_{\theta_n} - \Gamma_{\theta_0}\|+\|\Gamma_{\theta_0}\|)=o_\mathbb{P}(1)$$ since $\frac{1}{\sqrt{n}}\sum_{i=1}^n Z_{i,b} \dot{\phi}^\star(X_i,X_{i-1},\theta_0)=O_\mathbb{P}(1)$ by Assumption (M3). Finally, by Assumption (M2),
$$\|R_{n,5}\|\le \sqrt{n}\|\theta_n-\theta_0\|\lambda_3(\|\theta_n-\theta_0\|)\left(\frac{1}{n}\sum_{i=1}^n|Z_{i,b}|M_4(X_{i},X_{i-1})\right)
\times(\|\Gamma_{\theta_n} - \Gamma_{\theta_0}\|+\|\Gamma_{\theta_0}\|)=o_\mathbb{P}(1).$$
 Combining these results, yields $\sup_x|\check{\mathbb{D}}^{n,b}(x) - \mathbb{D}_{n,b}{\star}(x)|=o_\mathbb{P}(1)$ which completes the proof of part $(a)$.\\
\noindent{\bf Proof of $(b)$:}\\
One notices that the multiplier central limit theorem yields the weak convergence of $\mathbb{D}_{n,b}^{\star}$. It just remain to show that the asymptotic covariance operator of the process $\mathbb{D}_{n,b}^{\star}$ is the same as that of the limiting process $\widehat{\mathbb{D}}.$
$$\mbox{Cov}\left( \mathbb{D}_{n,b}^{\star}(x), \mathbb{D}_{n,b}^{\star}(y) \right) = \frac{1}{n} \sum_{i=1}^n \sum_{j=1}^n \mathbb{E}(V_i(x) V_j(y))\mathbb{E}(Z_{i,b}Z_{j,b}),$$ where $V_i(x)={W}_{\theta_0}(X_i,I_{i-1}) \1{\{X_{i-1}\leq x\}}
- \phi^\star(X_i,I_{i-1},\theta_0)^\top \Gamma_{\theta_0}(x).$
One has only to consider the terms with $i=j$ in the above sums, since for $i\neq j, \E\left\{ Z_{j,b} Z_{i,b} \right\} = 0$. Recalling that  $\mathbb{E}([Z_{i,b}]^2)=1$, one  sees that the above covariance reduces to
$
\mbox{Cov}\left( \mathbb{D}_n^{\star k}(x), \mathbb{D}_n^{\star k}(y) \right) = \frac{1}{n} \sum_{i=1}^n \E\left( V_i(x) V_i(y) \right)=\E\left( V_1(x) V_1(y) \right).
$
Straightforward computations show that
$$\E\left( V_1(x) V_1(y)\right) = {K}(x,y)-\Gamma^\top_{\theta_0}(x) G(y)-G^\top(x)\Gamma_{\theta_0}(y)
+ \Gamma^\top_{\theta_0}(x)\Sigma_0\Gamma_{\theta_0}(y),$$
which is the same as the covariance of the  process $\widehat{\mathbb{D}}(x)$. $\qed$

\bibliographystyle{apalike}
\bibliography{cglA}

\begin{thebibliography}{}

\bibitem[Ait-Sahalia, 1996]{ait}
Ait-Sahalia, Y. (1996).
\newblock Testing continuous-time models of the spot interest rate.
\newblock {\em The review of financial studies}, 9(2):385--426.

\bibitem[A{\"\i}t-Sahalia, 1999]{ait3}
A{\"\i}t-Sahalia, Y. (1999).
\newblock Transition densities for interest rate and other nonlinear
  diffusions.
\newblock {\em The journal of finance}, 54(4):1361--1395.

\bibitem[A{\"\i}t-Sahalia, 2002]{ait2}
A{\"\i}t-Sahalia, Y. (2002).
\newblock Maximum likelihood estimation of discretely sampled diffusions: a
  closed-form approximation approach.
\newblock {\em Econometrica}, 70(1):223--262.

\bibitem[Andrews, 1992]{And}
Andrews, D.~W. (1992).
\newblock Generic uniform convergence.
\newblock {\em Econometric theory}, 8(2):241--257.

\bibitem[Auestad and Tj{\o}stheim, 1990]{AuTj}
Auestad, B. and Tj{\o}stheim, D. (1990).
\newblock Identification of nonlinear time series: First order characterization
  and order determination.
\newblock {\em Biometrika}, 77(4):669--687.

\bibitem[Bachelier, 1900]{Bach}
Bachelier, L. (1900).
\newblock Th{\'e}orie de la sp{\'e}culation.
\newblock In {\em Annales scientifiques de l'{\'E}cole normale sup{\'e}rieure},
  volume~17, pages 21--86.

\bibitem[Bai, 2003]{Ba}
Bai, J. (2003).
\newblock Testing parametric conditional distributions of dynamic models.
\newblock {\em Review of Economics and Statistics}, 85(3):531--549.

\bibitem[Bai and Ng, 2001]{BaNG}
Bai, J. and Ng, S. (2001).
\newblock A consistent test for conditional symmetry in time series models.
\newblock {\em Journal of Econometrics}, 103(1-2):225--258.

\bibitem[Black and Scholes, 1973]{BS}
Black, F. and Scholes, M. (1973).
\newblock The pricing of options and corporate liabilities.
\newblock {\em Journal of political economy}, 81(3):637--654.

\bibitem[Bollerslev, 1986]{Bollerslev1986}
Bollerslev, T. (1986).
\newblock Generalized autoregressive conditional heteroskedasticity.
\newblock {\em J. Econometrics}, 31(3):307--327.

\bibitem[Chan et~al., 1992]{CKLS}
Chan, K.~C., Karolyi, G.~A., Longstaff, F.~A., and Sanders, A.~B. (1992).
\newblock An empirical comparison of alternative models of the short-term
  interest rate.
\newblock {\em The journal of finance}, 47(3):1209--1227.

\bibitem[Chen and An, 1997]{CA}
Chen, M. and An, H.~Z. (1997).
\newblock A kolmogorov-smirnov type test for conditional heteroskedasticity in
  time series.
\newblock {\em Statistics \& probability letters}, 33(3):321--331.

\bibitem[Chen et~al., 2015]{chen-etal2015}
Chen, Q., Zheng, X., and Pan, Z. (2015).
\newblock Asymptotically distribution-free tests for the volatility function of
  a diffusion.
\newblock {\em Journal of econometrics}, 184(1):124--144.

\bibitem[Cox et~al., 1980]{CIR}
Cox, J.~C., Ingersoll~Jr, J.~E., and Ross, S.~A. (1980).
\newblock An analysis of variable rate loan contracts.
\newblock {\em The Journal of Finance}, 35(2):389--403.

\bibitem[Dahl and Iglesias, 2007]{Dahl.Iglesias(2007)}
Dahl, C. and Iglesias, E. ((2007).
\newblock Asymptotic normality of the qmle of stationary and nonstationary
  garch with serially dependent innovations.
\newblock {\em Manuscript}, pages 2--36.

\bibitem[Deheuvels and Martynov, 1996]{DM}
Deheuvels, P. and Martynov, G.~V. (1996).
\newblock Cram{\'e}r-von mises-type tests with applications to tests of
  independence for multivariate extreme-value distributions.
\newblock {\em Communications in Statistics-Theory and Methods},
  25(4):871--908.

\bibitem[Dette et~al., 2007]{DNV}
Dette, H., Neumeyer, N., and Keilegom, I.~V. (2007).
\newblock A new test for the parametric form of the variance function in
  non-parametric regression.
\newblock {\em Journal of the Royal Statistical Society: Series B (Statistical
  Methodology)}, 69(5):903--917.

\bibitem[Diebolt, 1990]{DIE}
Diebolt, J. (1990).
\newblock Testing the functions defining a nonlinear autoregressive time
  series.
\newblock {\em Stochastic processes and their applications}, 36(1):85--106.

\bibitem[Drost and Nijman, 1993]{Dros-Nijman1993}
Drost, F.~C. and Nijman, T.~E. (1993).
\newblock Temporal aggregation of {GARCH} processes.
\newblock {\em Econometrica}, 61(4):909--927.

\bibitem[Durlauf, 1991]{Durlauf}
Durlauf, S.~N. (1991).
\newblock Spectral based testing of the martingale hypothesis.
\newblock {\em J. Econometrics}, 50(3):355--376.

\bibitem[Engle, 1982]{Engle}
Engle, R.~F. (1982).
\newblock Autoregressive conditional heteroscedasticity with estimates of the
  variance of united kingdom inflation.
\newblock {\em Econometrica: Journal of the Econometric Society}, pages
  987--1007.

\bibitem[Escanciano, 2007a]{escanciano2007}
Escanciano, J.~C. (2007a).
\newblock Model checks using residual marked empirical processes.
\newblock {\em Statistica Sinica}, 17(1):115--138.

\bibitem[Escanciano, 2007b]{escanciano2007b}
Escanciano, J.~C. (2007b).
\newblock Weak convergence of non-stationary multivariate marked processes with
  applications to martingale testing.
\newblock {\em Journal of Multivariate Analysis}, 98(7):1321--1336.

\bibitem[Escanciano, 2008]{escanciano2008}
Escanciano, J.~C. (2008).
\newblock Joint and marginal specification tests for conditional mean and
  variance models.
\newblock {\em Journal of Econometrics}, 143(1):74--87.

\bibitem[Escanciano, 2010]{escanciano2010}
Escanciano, J.~C. (2010).
\newblock Asymptotic distribution-free diagnostic tests for heteroskedastic
  time series models.
\newblock {\em Econometric Theory}, 26(3):744--773.

\bibitem[Escanciano and Mayoral, 2010]{EM}
Escanciano, J.~C. and Mayoral, S. (2010).
\newblock Data-driven smooth tests for the martingale difference hypothesis.
\newblock {\em Computational Statistics \& Data Analysis}, 54(8):1983--1998.

\bibitem[Escanciano et~al., 2018]{EPV}
Escanciano, J.~C., Pardo-Fern{\'a}ndez, J.~C., and Van~Keilegom, I. (2018).
\newblock Asymptotic distribution-free tests for semiparametric regressions
  with dependent data.
\newblock {\em The Annals of Statistics}, 46(3):1167--1196.

\bibitem[Fisher, 1950]{Fisher:1950}
Fisher, R.~A. (1950).
\newblock {\em Statistical Methods for Research Workers}.
\newblock Oliver and Boyd, London, 11h edition.

\bibitem[Genest and R\'emillard, 2004]{Genest/Remillard:2004}
Genest, C. and R\'emillard, B. (2004).
\newblock Tests of independence or randomness based on the empirical copula
  process.
\newblock {\em Test}, 13:335--369.

\bibitem[Ghoudi et~al., 2001]{Ghoudi/Kulperger/Remillard:2001}
Ghoudi, K., Kulperger, R.~J., and R\'emillard, B. (2001).
\newblock A nonparametric test of serial independence for time series and
  residuals.
\newblock {\em J. Multivariate Anal.}, 79:191--218.

\bibitem[Gon\c{c}alves and White, 2004]{GWhite(2004)}
Gon\c{c}alves, S. and White, H. (2004).
\newblock Maximum likelihood and the bootstrap for nonlinear dynamic models.
\newblock {\em J. Econometrics}, 119(1):199--219.

\bibitem[Gonz{\'a}lez-Manteiga and Crujeiras, 2013]{GC}
Gonz{\'a}lez-Manteiga, W. and Crujeiras, R.~M. (2013).
\newblock An updated review of goodness-of-fit tests for regression models.
\newblock {\em Test}, 22(3):361--411.

\bibitem[Imhof, 1961]{IM}
Imhof, J.-P. (1961).
\newblock Computing the distribution of quadratic forms in normal variables.
\newblock {\em Biometrika}, 48(3/4):419--426.

\bibitem[Jacod and Protter, 1998]{JP}
Jacod, J. and Protter, P. (1998).
\newblock Asymptotic error distributions for the euler method for stochastic
  differential equations.
\newblock {\em The Annals of Probability}, 26(1):267--307.

\bibitem[Khmaladze, 1988]{Khma}
Khmaladze, E. (1988).
\newblock An innovation approach to goodness-of-fit tests in $r^{m}$.
\newblock {\em The Annals of Statistics}, 16(4):1503--1516.

\bibitem[Koul and Song, 2010]{KSong}
Koul, H.~L. and Song, W. (2010).
\newblock Conditional variance model checking.
\newblock {\em Journal of Statistical Planning and Inference},
  140(4):1056--1072.

\bibitem[Koul and Stute, 1999]{KS}
Koul, H.~L. and Stute, W. (1999).
\newblock Nonparametric model checks for time series.
\newblock {\em The Annals of Statistics}, 27(1):204--236.

\bibitem[Kristensen and Rahbek, 2005]{KRahbek(2005)}
Kristensen, D. and Rahbek, A. (2005).
\newblock Asymptotics of the {QMLE} for a class of {${\rm ARCH}(q)$} models.
\newblock {\em Econometric Theory}, 21(5):946--961.

\bibitem[La\"ib, 1999]{La99}
La\"ib, N. (1999).
\newblock Nonparametric testing for correlation models with dependent data.
\newblock {\em Journal of Nonparametric Statistics}, 12(1):53--82.

\bibitem[La{\"\i}b, 2003]{La2003}
La{\"\i}b, N. (2003).
\newblock Non-parametric testing of conditional variance functions in time
  series.
\newblock {\em Australian \& New Zealand Journal of Statistics},
  45(4):461--475.

\bibitem[La{\"\i}b and Chebana, 2011]{LaChe}
La{\"\i}b, N. and Chebana, F. (2011).
\newblock A simultaneous test for conditional mean and conditional variance
  functions in time series models with martingale difference innovations.
\newblock {\em Statistical Methodology}, 8(2):221--241.

\bibitem[La\"ib and Louani, 2002]{LaLou}
La\"ib, N. and Louani, D. (2002).
\newblock On the conditional homoscedasticity test in autoregressive model with
  arch error.
\newblock {\em Communications in Statistics-Theory and Methods},
  31(7):1179--1202.

\bibitem[McKeague and Zhang, 1994]{McZh}
McKeague, I.~W. and Zhang, M.-J. (1994).
\newblock Identification of nonlinear time series from first order cumulative
  characteristics.
\newblock {\em The Annals of Statistics}, pages 495--514.

\bibitem[Merton, 1970]{merton1970}
Merton, R.~C. (1970).
\newblock A dynamic general equilibrium model of the asset market and its
  application to the pricing of the capital structure of the firm.
\newblock {\em Sloan School of Management Working Paper, No. 497-70}.

\bibitem[Ngatchou-Wandji, 2002]{N}
Ngatchou-Wandji, J. (2002).
\newblock Weak convergence of some marked empirical processes: Application to
  testing heteroscedasticity.
\newblock {\em Journal of Nonparametric Statistics}, 14(3):325--339.

\bibitem[Pardo-Fern{\'a}ndez et~al., 2015]{PMA}
Pardo-Fern{\'a}ndez, J.~C., Jim{\'e}nez-Gamero, M.~D., and El~Ghouch, A.
  (2015).
\newblock Tests for the equality of conditional variance functions in
  nonparametric regression.
\newblock {\em Electronic Journal of Statistics}, 9(2):1826--1851.

\bibitem[Polonik and Yao, 2008]{PolYao}
Polonik, W. and Yao, Q. (2008).
\newblock Testing for multivariate volatility functions using minimum volume
  sets and inverse regression.
\newblock {\em Journal of econometrics}, 147(1):151--162.

\bibitem[Shorack and Wellner, 1986]{SW}
Shorack, G. and Wellner, J. (1986).
\newblock Empirical processes with applications to statistics. 1986.
\newblock {\em John Wiley\&Sons}.

\bibitem[Stute, 1997]{Stute1997}
Stute, W. (1997).
\newblock Nonparametric model checks for regression.
\newblock {\em The Annals of Statistics}, pages 613--641.

\bibitem[Stute et~al., 2006]{SPGK2006}
Stute, W., Quindimil, M.~P., Manteiga, W.~G., and Koul, H. (2006).
\newblock Model checks of higher order time series.
\newblock {\em Statistics \& probability letters}, 76(13):1385--1396.

\bibitem[Stute et~al., 1998]{stute-etal1998}
Stute, W., Thies, S., Zhu, L.-X., et~al. (1998).
\newblock Model checks for regression: an innovation process approach.
\newblock {\em The Annals of Statistics}, 26(5):1916--1934.

\bibitem[Vasicek, 1977]{Vas}
Vasicek, O. (1977).
\newblock An equilibrium characterization of the term structure.
\newblock {\em Journal of financial economics}, 5(2):177--188.

\bibitem[Wang and Zhou, 2005]{WaZh}
Wang, L. and Zhou, X.-H. (2005).
\newblock A fully nonparametric diagnostic test for homogeneity of variances.
\newblock {\em Canadian Journal of Statistics}, 33(4):545--558.

\bibitem[Wooldridge, 1990]{Wooldridge1990}
Wooldridge, J.~M. (1990).
\newblock A unified approach to robust, regression-based specification tests.
\newblock {\em Econometric Theory}, 6(1):17--43.

\end{thebibliography}

\end{document}